\documentclass[10pt,journal,compsoc]{IEEEtran}

%
\ifCLASSOPTIONcompsoc
\usepackage[nocompress]{cite}
\else
\usepackage{cite}
\fi

%
\ifCLASSINFOpdf
\else
\fi
%
%

\usepackage{cite}
\usepackage[numbers,sort&compress]{natbib}
\usepackage{amsmath,amssymb,amsfonts}
\usepackage{textcomp}
\def\BibTeX{{\rm B\kern-.05em{\sc i\kern-.025em b}\kern-.08em
		T\kern-.1667em\lower.7ex\hbox{E}\kern-.125emX}}
\usepackage{tabularx}
\usepackage{graphicx}
\usepackage{algorithm}
\usepackage{algpseudocode}
\usepackage{subcaption}
\usepackage{threeparttable}
\usepackage{amsthm}
\usepackage{commath}
\usepackage{makecell}
\usepackage{array}
\usepackage{multirow}

\newcommand\setrow[1]{\gdef\rowmac{#1}#1\ignorespaces}
\newcommand\clearrow{\global\let\rowmac\relax}
\clearrow

\usepackage[dvipsnames]{xcolor}
\usepackage{pgfplots,pgfplotstable}
\usepackage{datatool}
\usepackage{readarray}
\usepackage{listofitems}
\readarraysepchar{,}

\usetikzlibrary{calc}

\definecolor{blizzardblue}{rgb}{0.78, 1, 1}
\definecolor{blue-violet}{rgb}{0.55, 0.49, 0.89}

\newtheorem{theorem}{Theorem}
\newtheorem{lem}{Lemma}

\setlength{\belowdisplayskip}{0pt} \setlength{\belowdisplayshortskip}{0pt}
\setlength{\abovedisplayskip}{0pt} \setlength{\abovedisplayshortskip}{0pt}
\newcommand{\Cross}{\mathbin{\tikz [x=1.8ex,y=1.8ex,line width=.2ex] \draw (0,0) -- (1,1) (0,1) -- (1,0);}}%


\pgfplotsset{
	compat=1.16,
	stylebar/.style={
		ybar,
		ylabel={ Runtimes},
		xlabel={Number of nodes},
		ymin=0,
		symbolic x coords={1,2,4,6,8,10,12,14,16,18,20},
		xtick=data,
		ylabel near ticks,
		nodes near coords,
		nodes near coords align={vertical},
		every node near coord/.append style={font=\footnotesize},
		bar width=3mm,
		legend style={
			at={(0.5,1.05)},    
			anchor=south,       
			legend columns=2,
		},
	},
	styleline/.style={
		xmin = 1,
		xmax = 20,
		axis y line=right,
		axis x line=none,
		symbolic x coords={1,2,4,6,8,10,12,14,16,18,20},
		xtick=data,
		enlarge x limits=0.1,
		ylabel = {Speed up},
		ylabel near ticks,
		ylabel style={rotate=-180},
		xmajorgrids,
		scaled y ticks = false,
		ymin=0, ymax=20,
		scatter/position=absolute,
		node near coords style={
			at={(\pgfkeysvalueof{/data point/x},0.3)},
			anchor=center,
			font=\footnotesize,
			/pgf/number format/.cd,
			fixed,
			precision=2,
			zerofill,
		},
	},
	customlegend/.style={
		ultra thick,
		lineplot
	},
	lineplot/.style={blue,mark=*,sharp plot,line legend}
}

\begin{document}
	%
	\title{A Unified Approach for Multi-Scale Synchronous Correlation Search in Big Time Series}
	\author{\IEEEauthorblockN{Nguyen Ho\IEEEauthorrefmark{1},
			Van Long Ho\IEEEauthorrefmark{1},
			Torben Bach Pedersen\IEEEauthorrefmark{1},
			Mai Vu\IEEEauthorrefmark{2},
			Christophe A.N. Biscio\IEEEauthorrefmark{3}} \\
		\IEEEauthorblockA{\IEEEauthorrefmark{1}Department of Computer Science, Aalborg University, Denmark}\\
		\IEEEauthorblockA{\IEEEauthorrefmark{2}Department of Electrical \& Computer Engineering, Tufts University, Medford, MA, USA} \\
		\IEEEauthorblockA{\IEEEauthorrefmark{3}Department of Mathematical Science, Aalborg University, Denmark}\\
		Email: ntth@cs.aau.dk, vlh@cs.aau.dk, tbp@cs.aau.dk, mai.vu@tufts.edu,  christophe@math.aau.dk}

	
	%

	\IEEEtitleabstractindextext{%
		\begin{abstract}
		The wide deployment of IoT sensors has enabled the collection of very big time series across different domains, from which advanced analytics can be performed to find unknown relationships, most importantly the {\em correlations} between them. However, current approaches for correlation search on time series are limited to only a single temporal scale and simple types of relations, and cannot handle noise effectively. This paper presents the {\em integrated SYnchronous COrrelation Search (iSYCOS)} framework to find multi-scale correlations in big time series. Specifically, iSYCOS integrates top-down and bottom-up approaches into a single auto-configured framework capable of efficiently extracting complex window-based correlations from big time series using {\em mutual information (MI)}. Moreover, iSYCOS includes a novel MI-based theory to identify noise in the data, and is used to perform pruning to improve iSYCOS performance. Besides, we design a {\em distributed version of iSYCOS} that can scale out in a Spark cluster to handle big time series. Our extensive experimental evaluation on synthetic and real-world datasets shows that iSYCOS can auto-configure on a given dataset to find {\em complex multi-scale correlations}. The pruning and optimizations can improve iSYCOS performance up to {\em an order of magnitude}, and the distributed iSYCOS can {\em scale out linearly} on a computing cluster. 
		\end{abstract}
		
		\begin{IEEEkeywords}
			temporal correlation, hill climbing, sliding window, mutual information, top-down, bottom-up
	\end{IEEEkeywords}}
	
	\maketitle

	\section{Introduction}
The massive spread of cheap IoT sensors has enabled the collection of big time series at unprecedented volume and velocity from different domains such as weather, energy, and transportation. Mining such big time series can help extract insights and unknown relationships from cross-domain datasets. One of the most fundamental analysis is to find correlations between time series, as the knowledge of how data are correlated with each other can enable more advanced analyses to support evidence-based decision making.  
For example, finding correlations between weather and transportation data can show how weather events like storm and rain affect traffic speed and use of taxis, from which public transport can be better scheduled; or knowing correlations between weather and energy production time series can help improve renewable energy forecasts. 
Furthermore, as correlation is one of the three building blocks to establish causal relations \cite{agresti2012social}, finding correlations can thus help construct causal inference models. 
           
Searching for correlations in big time series, however, is challenging. Specifically, an effective correlation search approach needs to deal with {\em (1) multiple temporal scales, (2) complex relations,} and {\em (3) noise}.
First, correlations often appear at multiple temporal scales, for instance, correlations involved weather data might range from hours (e.g., during rain showers), to days or weeks (e.g., during a storm), depending on the type of weather events.
Second, real-world time series exhibit different types of complex relations, including linear and non-linear, monotonic and non-monotonic, positive and negative, functional and non-functional. 
For example, stock prices or weather information exhibit non-linear relations, which rule out the use of traditional linear correlation metrics such as Pearson coefficients \cite{kenney2013mathematics}.
Lastly, big time series from IoT sensors are inherently noisy due to errors introduced by sensors malfunction, transmission, collection and processing. 
Although correlation search has been studied extensively in the literature, current state-of-the-art approaches are limited either to a single temporal scale, simple types of relations, e.g., only linear ones, and/or in their inability to handle noise (see Section 2 for details). Moreover, it is a common misconception that correlations and similarities in time series are the same concept, but they are indeed different since correlations capture the dependencies between data distributions, while time series similarities capture the similar values of data points identified by distance measure such as Euclidean distance. For example, if values in one time series go up, while they go down in another, the two time series are not similar, instead they have a {\em negative correlation}. Therefore, methods used for time series similarities search such as Dynamic Time Warping \cite{salvador2007toward} are not suitable to identify correlations.

\textbf{Contributions.} The present paper addresses all three challenges above by presenting the unified framework iSYCOS (integrated SYnchronous COrrelation Search) that supports: (1) multiple temporal scales, (2) complex relations, and (3) handling noisy data. Specifically, we make the following contributions. First, iSYCOS searches for correlations by relying on the {\em mutual information} (MI) between time series windows. The metric of mutual information is robust to noise and underlying data distributions, and is able to capture complex correlation relations, including linear and non-linear, monotonic and non-monotonic, positive and negative, functional and non-functional. Mutual information can also be computed for windows of any size, making it adaptive to multiple temporal scales.
Second, by combining mutual information as the metric and a window-based approach, iSYCOS can automatically discover correlations at multiple temporal scales without requiring users to specify the time scales or window sizes. Window sizes are automatically adjusted based on mutual information value which is used as an indicator for a potential correlation.
Third, we propose two window-based search methods, a top-down and a bottom-up one, to find correlations in time series datasets. The top-down approach, SYCOS$_{\text{TD}}$, starts with the largest window size and gradually narrows it down, and is thus suitable for time series containing correlations across large time scales. Reversely, the bottom-up approach, SYCOS$_{\text{BU}}$, starts with the smallest window size and gradually increases it, and is thus suitable for datasets where correlations are found at small time scales. 
Fourth, we propose a unified approach, iSYCOS, which integrates SYCOS$_{\text{TD}}$ and SYCOS$_{\text{BU}}$, and automatically selects the most appropriate method for a given dataset. Fifth, based on probability theory, we propose a novel theory to identify and disregard noise when computing mutual information, hence enabling efficient pruning of non-interesting time intervals from the search and significantly improving the search speed. This also improves the robustness of iSYCOS in processing noisy data. Sixth, iSYCOS proposes effective optimizations to reuse mutual information computations across neighboring windows, thus improving MI computation speed and making iSYCOS fast and memory efficient. Seventh, we propose a distributed version of iSYCOS based on Apache Spark which is scalable to large data volumes. Eighth, we perform time and space complexity analyses for both SYCOS$_{\text{TD}}$ and SYCOS$_{\text{BU}}$.
And finally, we perform a comprehensive experimental evaluation using synthetic and real-world datasets from the energy and smart city domains, showing that iSYCOS can automatically select the best approach (top-down or bottom-up), is able to find complex correlations at multiple time scales, that the noise pruning and optimizations are effective and have high accuracy, and that iSYCOS can scale out linearly on a cluster. 
	
\textit{Paper Outline.} The paper is organized as follows. Section \ref{sec:relatedwork} discusses the related work. 
Section~\ref{sec:formulation} formulates the window-based synchronous correlation search problem.
Section~\ref{sec:background} presents the background of mutual information and MI-based correlation.
Section~\ref{sec:method} presents SYCOS$_{\text{TD}}$, SYCOS$_{\text{BU}}$, and the integrated iSYCOS. Sections \ref{sec:noisetheory} and \ref{sec:distributedSYCOS} discuss the noise pruning, the optimized MI computation, and the distributed iSYCOS. Section \ref{sec:experiment} presents the experimental evaluation and finally, Section~\ref{sec:conclusion} concludes and points to future work.

	\section{Related Work}\label{sec:relatedwork}
Detecting correlations in time series has been extensively studied in the literature. Most of the proposed methods use traditional statistical metrics such as covariance and correlation coefficients (e.g., Pearson, Spearman) to measure correlations \cite{dean1974application,huang2010application,zhang2003correlation,ho2017towards,ho2015data,cappiello2015co,barkat2014open,ho2015data,ho2016characterizing,gribaudo2014analysis,ho2017improving,ho2013activity,ho2017towards}. However, these metrics can only identify linear and monotonic relations. 
To capture more complex dependencies such as non-linear and non-monotonic, recent work \cite{das2012finding, pochampally2014fusing,  alawini2014helping, roy2014formal, yang2002delta,de2007fast, middelfart2013efficient,chirigati2016data,xie2013local} attempts to address the problem from a higher level. For example, in \cite{das2012finding}, Sarma et al. use \textit{relatedness} to capture different kinds of related-relations, e.g., entity/schema complement, between data tables. 
Pochampally et al. \cite{pochampally2014fusing} use \textit{joint precision} and \textit{joint recall} to measure the similarity between datasets. Alawini et al. \cite{alawini2014helping} rely on the history and schema of datasets to map and link them. Roy et al. \cite{roy2014formal} use the concept of \textit{intervention} to capture query relations. Sousa et al. \cite{de2007fast} use the \textit{intrinsic dimension} to detect correlated attributes in databases. Cai et al. in \cite{cai2018effective} use \textit{cohort} analysis to look for causal explanations in the data. Yang et al. \cite{yang2002delta} use a \textit{residue} metric that measures the difference between the actual and expected value of an object to capture object correlations in large datasets. 
Middelfart et al. \cite{middelfart2013efficient} propose a bitmap-based approach to discover \textit{schema-level} relationships in a multidimensional data cube.
Chirigati et al. \cite{chirigati2016data} propose a topology-based framework to identify relationships between spatio-temporal datasets during both regular and extreme events. 
Other work such as \cite{xie2013local} uses Discrete Fourier Transform to detect local correlations in streaming data, with the main focus on the linear correlations.

Concerning correlation discovery techniques, the window-based technique has been used in \cite{schulz2002sliding,cole2005fast,keller2015estimating,ho2020efficientpattern,ho2020efficientpatternfull,ho2021efficientdistributed}. However, they adopt different correlation metrics, e.g., Schulz et al. \cite{schulz2002sliding} use the Spearman coefficient, while Cole et al. \cite{cole2005fast} use sketches. Other work such as \cite{camerra2014beyond} focuses on techniques for indexing and representing the time series, rather than the correlation search method itself.

Different from all the above work, our proposed iSYCOS framework uses MI as the correlation measure, and thus can discover not only linear and monotonic relations, but also non-linear and non-monotonic ones. 
By relying on a window-based technique to search for correlations, iSYCOS can find the exact time when the correlations occur, and thus provide richer information about the dependencies. Furthermore, we propose an MI-based theory that can identify noise in the data and thus, prune irrelevant data partitions during the search to improve its performance. We also optimize the MI computation to reuse information across windows, further speeding up the search. 

The MI measure has been broadly used in numerous domains to achieve different goals, e.g., feature selection \cite{estevez2009normalized}, clustering and mining \cite{ke2008information}, image alignment and registration \cite{pluim2003mutual}, and network inference \cite{meyer2007information}. However, using MI to find correlations in a Big Data context is still a new area. Su et al. \cite{su2014supporting} use MI to analyze relationships of massive scientific datasets. However, they do not consider window-based correlations, but instead only focus on the overall correlations of variables. Keller et al.  \cite{keller2015estimating} propose the MISE algorithm to estimate MI for streaming data, but did not use this MI estimation to infer correlation. 

In the present paper, the proposed iSYCOS framework is built upon the earlier versions of this work which investigated the two underlying approaches, bottom-up \cite{ho2020efficient}, \cite{ho2019efficient} and top-down \cite{ho2016adaptive}, \cite{ho2019amic}. In \cite{ho2020efficient} and \cite{ho2019efficient}, we investigated how the Late Acceptance Hill Climbing algorithm \cite{burke2017late} can be used to search for window-based correlations in a bottom-up fashion. 
In \cite{ho2016adaptive}, \cite{ho2019amic}, we proposed the top-down approach that uses a  hierarchical multi-layer search to discover correlated windows. In the present paper, we extend the previous work in the following ways. First, we unify the two approaches, top-down SYCOS$_{\text{TD}}$ and bottom-up SYCOS$_{\text{BU}}$, into one single framework iSYCOS. Second, we propose a set of efficiency scores to assess the performance of SYCOS$_{\text{TD}}$ and SYCOS$_{\text{BU}}$, and a selection algorithm using the efficiency scores to automatically analyze the input time series and select the most suitable method among SYCOS$_{\text{BU}}$ and SYCOS$_{\text{TD}}$ for a given input. Third, we integrate the novel MI-based noise theory, introduced previously in the bottom-up approach \cite{ho2020efficient}, into the top-down SYCOS$_{\text{TD}}$ to reduce the search space and improve its performance. 
Fourth, we develop a distributed version of iSYCOS using Apache Spark, allowing SYCOS$_{\text{BU}}$ and SYCOS$_{\text{TD}}$ to scale to big datasets on a cluster. Fifth, we provide the time and space complexity analyses for both SYCOS$_{\text{BU}}$ and SYCOS$_{\text{TD}}$, and conduct extensive experiments to compare their performance, showing which method is best for which specific scenarios. 

	\section{Problem Formulation}\label{sec:formulation}
We now define the window-based correlations, and formulate the Synchronous Correlation Search (SYCOS) problem.

\textit{Definition 3.1} (\textit{Time series}) A time series $X_T=\{x_1,x_2,...,x_n\}$ is a sequence of data values that measures the same phenomenon during an observation time period $T$, and is chronologically ordered. 

The observation time period $T=[t_1,t_n]$ contains $n$ time steps $t_i$, each recording a data value $x_{i} \in X_T$, where $t_1$ and $t_n$ denote the first and the last time step of $T$, respectively. We say $X_T$ has length $n$ if $X_T$ contains $n$ data values. 

\textit{Definition 3.2} (\textit{Time window}) A \textit{time window} $w=[t_{s},t_{e}]$ is a temporal sub-interval of $T$ that records the events of $X_T$ from the start time $t_{s}$ to the end time $t_{e}$, and forms a (sub) time series $X_w=\{x_{t_s},...,x_{t_e}\} \in X_T$. 
We say that $w$ has size $m$, denoted as $\abs{w}=m$, if $X_w$ contains $m$ data values.

\textit{Definition 3.3} (\textit{Pair of time series}) A pair of two time series $(X_T,Y_T)=(\{x_1,x_2,...,x_n\},\{y_1,y_2,...,y_n\})$ contains data collected from $X_T$ and $Y_T$ that measure two separate phenomena during the same time interval $T$. A tuple $(x_i,y_i) \in (X_T,Y_T)$ records the data values of $X_T$ and $Y_T$ at the same time instance $t_i$.  

\textit{Definition 3.4} (\textit{Pair of time windows}) A pair of time windows $(w_X, w_Y)=([t_{x_s},t_{x_e}],$ $[t_{y_s},t_{y_e}])$ is a pair of temporal sub-intervals of $T$ that records the events of $X_T$ and $Y_T$ during $[t_{x_s},t_{x_e}]$ and $[t_{y_s},t_{y_e}]$, respectively. 
The pair $(w_X, w_Y)$ is \textit{synchronous} iff $t_{x_s}=t_{y_s} \wedge t_{x_e}=t_{y_e}$. For simplicity, the \textit{synchronous} pair $(w_X, w_Y)$ is called a \textit{time window} of $(X_T,Y_T)$, and denoted as $w_{X,Y}=[t_s,t_e]$ where $t_s$ and $t_e$ are the start time and end time of the pair, and $t_s=t_{x_s}=t_{y_s} \wedge t_e=t_{x_e}=t_{y_e}$.  

\textit{Definition 3.5} (\textit{Correlated time window}) Consider a pair of time series $(X_T,Y_T)$  measured during the time interval $T$, and a \textit{time window} $w_{X,Y}$ of $(X_T,Y_T)$. Let $f(.)$ be a predefined correlation function, and $\sigma$ be a predefined correlation threshold. The two time series $X_T$ and $Y_T$ are said to be \textit{correlated within the window} $w_{X,Y}$ iff $X_w$ and $Y_w$ are correlated according to $f(.)$, i.e., $f(X_w,Y_w) \ge \sigma$. 

The function $f(.)$ can be any function that measures  data dependencies, for example, Pearson correlation coefficient, Spearman correlation, Euclidean distance, or Mutual Information. Based on the above definitions, we formulate the synchronous correlation search problem as follows.

\textit{\bf SYnchronous COrrelation Search (SYCOS)}: Let $(X_T,Y_T)$ be a pair of time series measured during the time interval $T$, $s_{\min}$ and $s_{\max}$ be the minimum and maximum sizes that a time window $w_{X,Y}$ can have, $f(.)$ be a correlation function, and $\sigma$ be the predefined correlation threshold. 
Then SYCOS finds a set $S$ of all non-overlapping time windows $w_{X,Y}=[t_s,t_e]$ such that $t_1 \le t_s < t_e \le t_n$ $\wedge$ $s_{\min} \le \abs{w_{X,Y}} \le s_{\max}$ $\wedge$ $f(X_w,Y_w) \ge \sigma$ $\wedge$ $\forall w_i, w_j \in S$$:$ $ w_i \cap w_j = \emptyset$.

The \textit{SYCOS} problem aims to find all non-overlapping time windows of a given time series pair such that their sub time series are correlated according to the correlation function and the predefined threshold.

	\section{Mutual Information-Based Correlation}\label{sec:background}
In Section \ref{sec:formulation}, we use a general correlation function $f(.)$ to define correlated time windows. In this section, we discuss our choice of using mutual information as a correlation measure, and argue why it is a good fit for the correlation search problem.
\subsection{Mutual Information}
Mutual information is a statistical measure to quantify the shared information between two probability distributions. Given two discrete random variables $X$, $Y$ with the corresponding probability mass functions (pmfs) $p(x)$, $p(y)$, and the joint distribution $p(x,y)$, the mutual information between $X$ and $Y$ is defined as 
\begin{equation}
\begin{split}
I(X;Y)=\sum_{y \in Y} \sum_{x \in X} p(x,y) \log\frac{p(x,y)}{p(x)p(y)}
\end{split}
\label{eq:mi}
\end{equation}

Intuitively, $I(X;Y)$ represents the reduction of uncertainty of one variable (e.g., $X$) given the knowledge of another variable (e.g., $Y$) \cite{cover2012elements}. The larger $I(X;Y)$, the more information is shared between $X$ and $Y$, and thus, the less uncertainty about one variable when knowing the other. Mutual information possesses multiple appealing properties that lead to its widespread use in many different fields, including machine learning, neuroscience, and data compression \cite{ho2017towards}. Given the mutual information $I(X;Y)$, the following properties hold: (1) Non-negativity: $I(X;Y) \ge 0$; (2) Invariance under one-to-one transformations: $I(X;Y) = I(U(X);V(Y))$ where $U$ and $V$ are homeomorphisms that create homeomorphic mappings on $X$ and $Y$; (3) Chain rule expansion: $I(X_1,...,X_n;Y) = \sum_{i=1}^{n} I(X_i;Y|X_{i-1},...,X_1)$.

The MI value is equal to zero if and only if the considered variables are statistically independent, and otherwise always positive if there exists any kind of dependency (e.g., functional and non-functional, linear and non-linear) \cite{de2013comparative}. This property makes MI a versatile measure to capture correlations in noisy data sets which often exhibit a high degree of bias and abnormality, causing their relationships to often be arbitrary and non-linear. 

\subsection{Estimating Mutual Information} Although MI is an important measure of information for many applications, applying MI in practice is challenging due to the difficulty in estimating probability distributions. As finding a more efficient method to estimate MI from collected samples is still an active research problem, in this work, we adopt the MI estimator proposed by Kraskov et al. \cite{kraskov2004estimating} (hereafter called \textit{KSG}) for several reasons: (1) \textit{KSG} outperforms other estimators (e.g., histogram, kernel density estimation) in terms of computational efficiency and accuracy \cite{papana2009evaluation}; (2) \textit{KSG} uses $k-$nearest neighbor approximation and thus is data efficient (i.e., it does not require very large sets of samples), is adaptive and has minimal bias \cite{kraskov2004estimating}. 

The main idea of \textit{KSG} is that instead of directly computing the joint and marginal probability distributions of the considered variables, it approximates the distributions by computing the densities of data points in nearby neighborhoods \cite{kraskov2004estimating}. Specifically, \textit{KSG} computes the probability distribution for the distance between each data point and its $k^{\text{th}}$ nearest neighbor. For each data point, it first searches for $k$ nearest neighbor clusters ($k$ is a pre-defined parameter) and computes the distance $d_k$ to the $k^{\text{th}}$-neighbor. Then, the population density within the distance $d_k$ is estimated by counting the number of data points that fall inside $d_k$. 
This leads to the computation of MI between $X$ and $Y$ as \cite{kraskov2004estimating}: 
\begin{equation}
I(X;Y) \approx \psi(k) - 1/k  - \left<\psi(n_{x}) + \psi(n_{y})\right> + \psi(n)
\label{formula:KSGmi}
\end{equation}
where $\psi$ is the digamma function, $k$ is the number of nearest neighbors, $(n_{x}, n_{y})$ is the number of marginal data points in each dimension falling within the distance $d_k$, $n$ is the total number of data points and $\left<\cdot\right>$ is the average function. 
The digamma function $\psi$ is a monotonically increasing function. Thus, the larger $n_x$ and $n_y$ (i.e., more data points fall within the distance $d_k$), the lower $I(X;Y)$, and vice versa. In practice, mutual information can be computed using empirical distributions as follows. 

\textit{Definition 4.1} (\textit{Empirical distribution}) Let $X_T$ be a time series measured during a time interval $T$. A distribution $\hat{p}$ is an empirical distribution of $X_T$, denoted as $\hat{p}(x_i)$, if $\hat{p}$ is estimated from the empirical measured values $x_i \in X_T$, $i=1,...,n$. 

\textit{Definition 4.2} (\textit{Mutual information of a time window}) Let $(X_T,Y_T)$ be a pair of time series measured during the time interval $T$, and $w_{X,Y}$ be a \textit{time window} of $(X_T,Y_T)$. The mutual information between $X$ and $Y$ within $w_{X,Y}$ can be estimated using the empirical distributions $\hat{p}$ of data within this window as
\begin{equation}
I_{w_{X,Y}}=I(X_w;Y_w) \approx \sum_{y_i \in Y_w} \sum_{x_i \in X_w} \hat{p}(x_i,y_i)\log\frac{\hat{p}(x_i,y_i)}{\hat{p}(x_i)\hat{p}(y_i)}
\end{equation}

	\section{The Unified  SYCOS Search Framework}\label{sec:method}
We now present our unified framework iSYCOS for synchronous correlation search. We start by describing different scenarios where correlated time windows are distributed differently on a given pair of time series, and how different search approaches are beneficial for each of those scenarios. 

\begin{figure*}[t]
	\vspace{-0.05in}
	\centering
	\begin{subfigure}{0.33\textwidth}
		\centering
		\includegraphics[width=\textwidth]{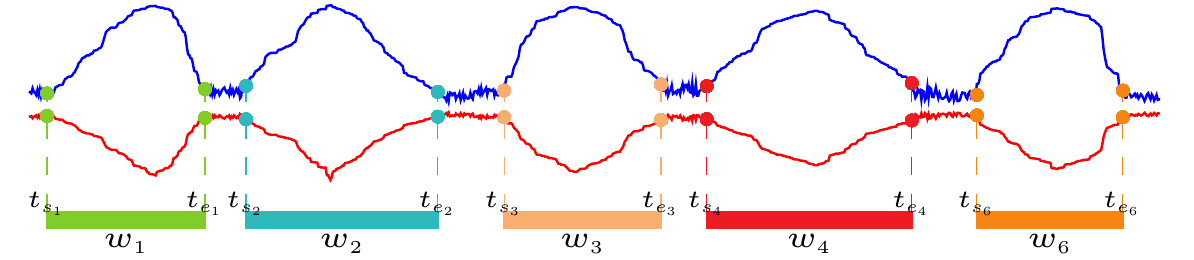} 
		\vspace{-0.2in}
		\caption{Dense correlated time windows}
		\label{fig:top_down_scenario}
	\end{subfigure}
	\begin{subfigure}{0.33\textwidth}
		\centering
		\includegraphics[width=\textwidth]{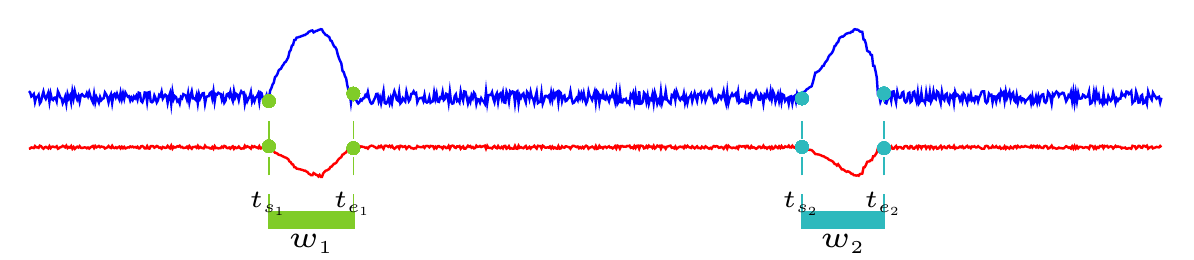} 
		\vspace{-0.2in}
		\caption{Sparse correlated time windows}
		\label{fig:bottom_up_scenario}
	\end{subfigure}	
	\begin{subfigure}{0.33\textwidth}
		\centering
		\includegraphics[width=\textwidth]{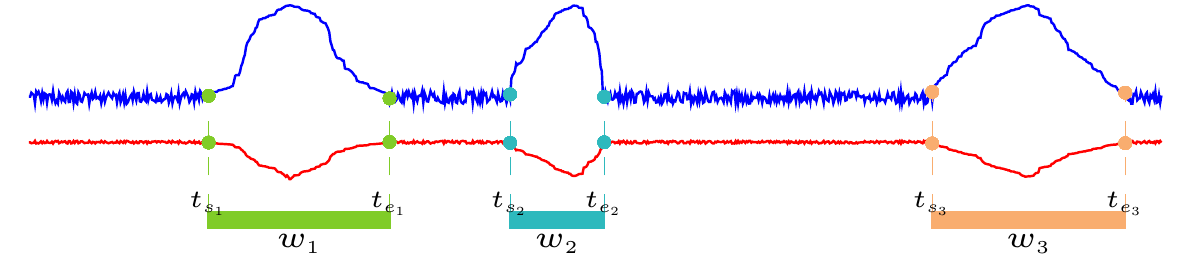} 
		\vspace{-0.2in}
		\caption{Moderate correlated time windows}
		\label{fig:selection_scenario}
	\end{subfigure}
	\vspace{-0.1in}
	\caption{Different correlation search scenarios}
	\label{fig:scenarios}
		\vspace{-0.2in}
\end{figure*}

Figure \ref{fig:scenarios} outlines three scenarios of correlated time windows in the time series. The first scenario in Figure \ref{fig:top_down_scenario} illustrates a situation where correlated time windows are large and located close or next to each other. In such a scenario, a search strategy that starts with a large window, and only reduces the window size when needed, is more efficient to identify correlated windows. We call this search strategy a \textit{top-down} approach. The second scenario in Figure \ref{fig:bottom_up_scenario} is when correlated windows are small and sparsely distributed. In this scenario, a search strategy that starts from a small window, and incrementally extends the window size only when needed, is faster to locate correlated windows. We call this search strategy a \textit{bottom-up} approach. 
Finally, the third scenario in Figure \ref{fig:selection_scenario} is when correlated windows are both small and large, and are neither close nor far away from each other. In this scenario, we need a strategy to select between bottom-up and top-down search. 

The following sections will introduce the top-down and bottom-up approaches separately. We then present a selection algorithm that automatically evaluates and selects the most efficient search approach between top-down and bottom-up for a given time series pair.

\subsection{SYCOS$_{\text{TD}}$: Top-Down Search}\label{sec:topdown}
The top-down approach SYCOS$_{\text{TD}}$ uses a breadth-first search strategy to search for correlated time windows: it starts with a large window, and reduces the size when needed as the search progresses. Intuitively, SYCOS$_{\text{TD}}$ assumes that correlations exist in large windows, and thus, it is more efficient to start with those. 

\subsubsection{Multi-layer hierarchical search space}
To perform the top-down search, SYCOS$_{\text{TD}}$ divides the search space into multiple hierarchical layers, each corresponding to one specific window size. The \textit{temporal scale} is used to keep track of window sizes among different layers. 

\textit{Definition 5.1} (\textit{Temporal scale}) 
A \textit{temporal scale} represents the size of windows belonging to the same layer in a multi-layer hierarchical search space.

Examples of temporal scale are year, month, week, day, and so on. Users can define the temporal scale using their domain knowledge. For instance, windows with temporal scale \textit{year} contain time series of one year, while windows of \textit{day} temporal scale contain time series of one day. 
Fig. \ref{fig:topdownsearchtree} depicts the multi-layer hierarchical search space. Here, the top layer L$_1$ has the coarsest temporal scale, corresponding to the largest windows. The bottom layer L$_n$ has the finest temporal scale, corresponding to the smallest windows. 

\subsubsection{Top-down search with sliding window}
Given the multi-layer search space, initially, SYCOS$_{\text{TD}}$ starts out at the top layer L$_1$, using the largest window size $s_{\max}$. In each layer, SYCOS$_{\text{TD}}$ uses a sliding window to slide over the time series, identifying correlated windows and filtering out uncorrelated ones. 
Specifically, to determine whether the time series in a time window $w_i$ are correlated, the MI $I_{w_i}$ is computed and compared against the threshold $\sigma$: $w_i$ is a \textit{correlated} window if $I_{w_i} \ge \sigma$, and \textit{uncorrelated} otherwise. The result set $S$ is used to collect correlated windows, while uncorrelated ones are passed down to the next lower layer of the search space. As the search progresses, SYCOS$_{\text{TD}}$ traverses to lower layers until it reaches the lowest one L$_n$ which has the smallest window size $s_{\min}$. 

In the same layer, two consecutive windows $w_i$ and $w_{i+1}$ can be either disjoint or overlapping depending on whether significant correlation exists in one of them. More specifically, if $w_i$ is a correlated window, then $w_i$ and $w_{i+1}$ are disjoint. In contrast, if $w_i$ is an uncorrelated window, then $w_{i+1}$ is created by shifting $w_i$ by a predefined $\delta$-step, making $w_i$ and $w_{i+1}$ overlap each other. 
We illustrate this sliding window technique in Fig. \ref{fig:topdownsearchtree}. 

SYCOS$_{\text{TD}}$ starts out at L$_1$ with the first window $w_0=[t_{s_0},t_{e_0}]$ of size $s_{\max}$. The MI value $I_{w_0}$ is computed and compared against $\sigma$ to determine whether $w_0$ is a correlated window. 
In this example, we assume $I_{w_0}<\sigma$, and thus, $w_{0}$ is uncorrelated. SYCOS$_{\text{TD}}$ passes $w_0$ down to layer L$_2$ and moves to the next window $w_1 \in L_1$. 
To define $w_1$, SYCOS$_{\text{TD}}$ shifts $w_0$ by a $\delta$-step, forming $w_1=[t_{s_1},t_{e_1}]$ as $t_{s_1}=t_{s_0}+\delta$ $\wedge$ $t_{e_1}=t_{e_0}+\delta$. 
The MI of $w_1$ is then evaluated, which results in $I_{w_1} < \sigma$. Here, $w_0$ and $w_1$ are two consecutive and uncorrelated windows, thus, their indices are merged into one single \textit{uncorrelated data partition} $[t_{s_{0}}, t_{e_{1}}]$. This \textit{uncorrelated data partition} will be used by SYCOS$_{\text{TD}}$ in the next search iteration in the lower layer. 

Next, SYCOS$_{\text{TD}}$ moves to $w_2$ which is formed by shifting $w_1$ a $\delta$-step. Assume that at $w_2$, we have $I_{w_2} \ge \sigma$, hence, the time series in $w_2$ are correlated. The window $w_2$ is inserted into the result set $S$, after which the \textit{uncorrelated data partition} $[t_{s_{0}}, t_{e_{1}}]$ previously formed by $w_0$ and $w_1$ is updated to $[t_{s_0},t_{s_2}]$, indicating that only the time series from $t_{s_0}$ to $t_{s_2}$ are uncorrelated. The search moves on to the next window $w_3=[t_{s_3},t_{e_3}]$. Here, since $w_2$ is a correlated window, $w_3$ starts right after $w_2$ with $t_{s_3}=t_{e_2}+1$. The procedure is repeated until the entire time series in L$_1$ is scanned.

At L$_2$ and lower layers, SYCOS$_{\text{TD}}$ repeats the search procedure on \textit{uncorrelated data partitions} passed down from the above layers, using progressively smaller windows until the lowest layer $L_n$ is scanned. 

\begin{figure*}[!t]
	\centering
	\begin{minipage}{0.49\linewidth}
		\vspace{-0.1in}
		\hspace{-0.1in}
		\centering
		\includegraphics[scale=0.85]{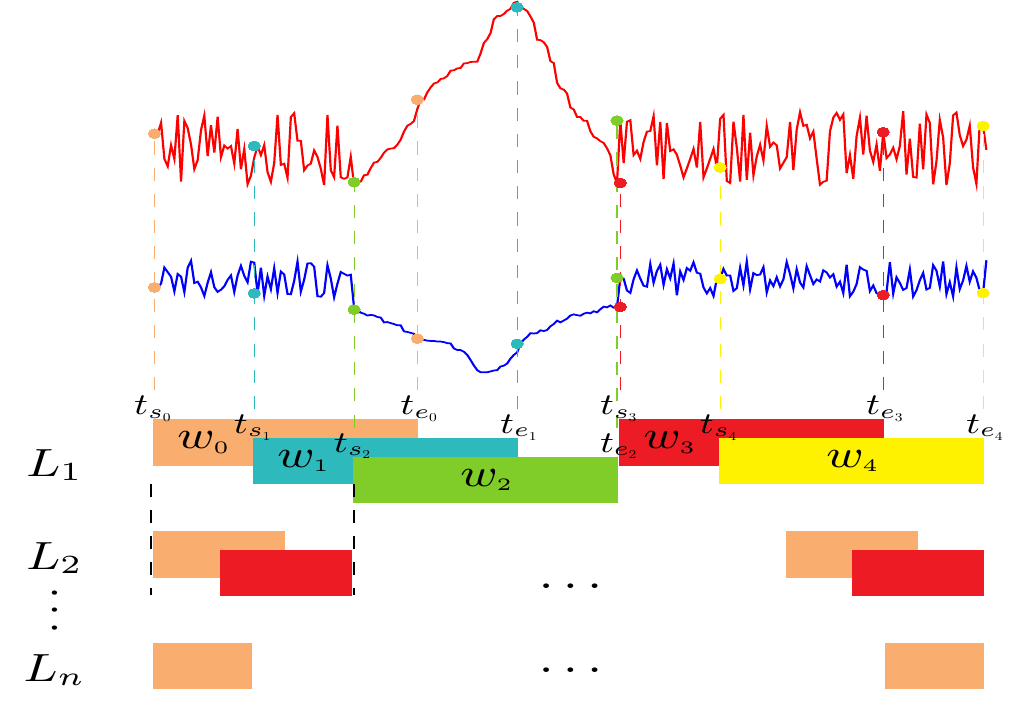} 
		\vspace{-0.1in}
		\caption{SYCOS$_{\text{TD}}$: multi-layer hierarchical search tree}
		\label{fig:topdownsearchtree}
	\end{minipage}
	\vspace{0.1in}
	\begin{minipage}{0.49\linewidth}
		\centering
		\hspace{0.1in}
		\includegraphics[scale=1.7]{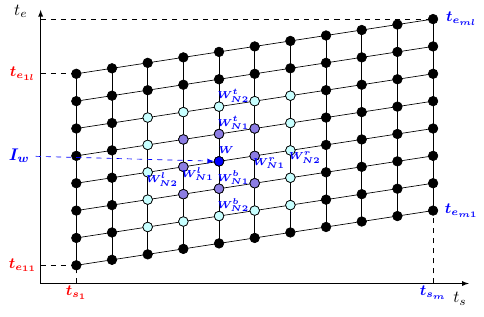} 
		\vspace{-0.1in}
		\caption{SYCOS$_{\text{BU}}$ search space}
		\label{fig:searchspacebf}
	\end{minipage}	
	\vspace{-0.1in}
\end{figure*}

Algorithm \ref{alg:topdownsearch} gives the pseudocode for SYCOS$_{\text{TD}}$. In line 1, SYCOS$_{\text{TD}}$ starts at L$_1$ with windows of size $s_{\max}$. Lines 6-14 perform sliding window search on each layer. Line 17 reduces the temporal scale in order to move to the next lower layer. SYCOS$_{\text{TD}}$ stops when it finishes the search of the lowest layer L$_n$ (line 2).

\begin{algorithm}
	\caption{SYCOS$_{\text{TD}}$: Top-Down Search}
	\footnotesize
	\begin{algorithmic}[1]
		\Statex{{\bf Input:}} $\{X_T,Y_T\}$: \text{pair of time series} 
		\Statex{{\bf Params:} $\sigma$: \text{correlation threshold,}} 
		\Statex{\hspace{1.07cm}$\delta$: \text{sliding step,}} 
		\Statex{\hspace{1.07cm}$s_{\min}$, $s_{\max}$: \text{minimum and maximum window sizes}} 
		\Statex{{\bf Output:} $S$: \text{set of non-overlapping windows with MI $\ge \sigma$}}
		
		\State{\textit{Initialize:} $s_w \gets s_{\max}$, $\textit{uncorrelatedPartitions} \gets \{X_T,Y_T\}$}
		\While{$s_{w} \ge s_{\min}$}
		\While{notEmpty({$\textit{uncorrelatedPartitions}$})}
		\State {$\textit{uPartition}$ $\gets$ {$\textit{uncorrelatedPartitions.next}$}} \Comment{gets the next uncorrelated partition}
		\State \text{$w$ $\gets$ getNextWindow({$\textit{uPartition}$}$,s_{w})$} \Comment{gets the next window from the current uncorrelated partition with size $s_w$}
		\While{\text{$\textit{w.endIndex}$ $\le$ $\textit{uPartition.endIndex}$}} 
		\State	 \text{$I_w$ $\gets$ computeMI($w$)} \Comment{compute MI value}
		\If{\text{$I_w \ge \sigma$}}
		\State \text{$\mathit{S}$.insert($w$)} 
		\State {$w$ $\gets$ getNextWindow($\textit{uPartition},s_w)$}
		\Else
		\State \text{$w$ $\gets$ shiftWindow($\textit{uPartition},s_w,\delta)$} 
		\EndIf
		\State \text{update($\textit{uncorrelatedPartitions}$)}
		\EndWhile
		\EndWhile
		\State $s_w  \gets \text{reduce}(s_w)$ \Comment{reduce the temporal scale in lower layer}
		\EndWhile
		\State \text{\textbf{return} $\mathit{S}$}
	\end{algorithmic}
	\label{alg:topdownsearch}
\end{algorithm}

\subsection{SYCOS$_{\text{BU}}$: Bottom-Up  Search}\label{sec:bottomup}
In contrast to the top-down approach, the bottom-up approach SYCOS$_{\text{BU}}$ starts with a small window, and incrementally enlarges the window size as the search progresses. SYCOS$_{\text{BU}}$ assumes that correlations exist in small windows, and thus, it is more efficient to start with those.
\subsubsection{Late Acceptance Hill Climbing}
SYCOS$_{\text{BU}}$ employs Late Acceptance Hill Climbing (LAHC) \cite{burke2017late}, an extension of the classic Hill Climbing (HC) \cite{russell2016artificial}, to navigate through the search space while looking for correlated windows. 
Particularly, given a \textit{target function} $f$ and a current solution $S$ of $f$, LAHC tries to improve $S$ by exploring the neighborhood of $S$ to find new solutions that are better than $S$. If a better solution is found (according to the predefined objectives), the current solution $S$ is replaced by this new solution $S_{\text{new}}$, and the process is repeated until no further improvement can be made.   
Compared to classic HC, LAHC is different in terms of its acceptance rule, i.e., it can accept a solution $S_{\text{new}}$ if $S_{\text{new}}$ is better than either the current solution $S$ or a solution $S_{\text{old}}$ found in the history. To do that, LAHC uses a fixed length array $L_h$ to maintain a history of the most recently accepted solutions, and use this $L_h$ array to justify the goodness of a candidate solution. 


\subsubsection{Applying LAHC to SYCOS$_{\text{BU}}$}
We consider the problem of searching for synchronous correlations using LAHC as a \textit{maximization} problem. Specifically, the \textit{target function} of LAHC is a \textit{maximize} function, and our goal is to search for the time windows where their MIs are locally maximal. To explain why local optima are in fact the solutions of SYCOS$_{\text{BU}}$, consider Fig. \ref{fig:mifluctuation} that plots the fluctuation of MI values for a given pair of variables. Here, the $y-$axis represents the MI values of the time windows on the $x-$axis. Given the correlation threshold $\sigma$ (red line in Fig. \ref{fig:mifluctuation}), the three windows which correspond to the three maximal points (in red) indicate highly correlated areas (above $\sigma$). 
These windows can be found by first identifying the three peak points which represent the locally maximal MI values, and then extending the located points to the surrounding areas to determine the exact indices of the windows. Such windows are guaranteed to satisfy the correlation condition $I_w \ge \sigma$. Since an LAHC-based method guarantees local optimal solutions, it is an ideal technique for SYCOS. 

\hspace{-0.2in}\textit{Definition 5.2} \textit{($\delta$-neighbor)} Let $(X_T,Y_T)$ be a pair of time series of length $n$ measured during the time interval $T$, and $w=[t_s,t_e]$ be a time window of $(X_T,Y_T)$. A window $w^{'}=[t_s^{'},t_e^{'}]$ is a \textit{$\delta$-neighbor} of $w$ if its start time $t_s^{'}$ and/or its end time $t_e^{'}$ differs by a $\delta$ value from $t_s$ and $t_e$, respectively. i.e., $t_s^{'}=t_s + \delta$ $\vee$ $t_s^{'}=t_s - \delta$ $\vee$ $t_e^{'}=t_e + \delta$ $\vee$ $t_e^{'}=t_e - \delta$, where $\delta$ is a pre-defined moving step such that $1 \le \delta \le n$ $\wedge$ $s_{\min} \le \mid$${w^{'}}$$\mid \le s_{\max}$. 
\\
\textit{Definition 5.3} \textit{($\delta$-neighborhood)} Let $w=[t_s,t_e]$ be a time window of $(X_T,Y_T)$. A \textit{$\delta$-neighborhood} $N$ of $w$, denoted as $N_{\delta}$, is the surrounding area of $w$ formed by all \textit{$\delta$-neighbors} $w^{'}=[t_s^{'},t_e^{'}]$ of $w$. 

We illustrate the \textit{neighborhood} concept in Fig. \ref{fig:searchspacebf}. 
Here, each point in the grid represents a time window where the x-axis indicates the start time, and the y-axis indicates the end time. Consider the window $w$ in blue. The nearest $\delta-$neighborhood of $w$, called the $1-$neighborhood $N_1$, is the area formed by the \textit{eight} windows in light purple color surrounding $w$. Each window in this neighborhood differs from $w$ by \textit{one} $\delta$ step, either by its start index, or its end index, or both. Going further from $w$, the next neighborhood of $w$, called the $2-$neighborhood $N_2$, is the $16$ windows in light blue color. Each $\delta-$\textit{neighborhood} forms an area where LAHC will iteratively look for solution candidates of a given window $w$. 
We provide the pseudocode of SYCOS$_{\text{BU}}$ in Algorithm \ref{alg:bottomupsearch}, and explain the details below.

Given a pair of time series $(X_T,Y_T)$, and the \textit{target function} $I_w$ to be maximized, SYCOS$_{\text{BU}}$ starts with an initial solution $w=w_0$ where $\abs{w_0}=s_{\min}$ (Algorithm \ref{alg:bottomupsearch}, line $2$), and evaluates the goodness of $w$ by computing the MI $I(w_0)$ (line 3). To maximize $I_w$, SYCOS$_{\text{BU}}$ first searches for a better solution in its nearest neighborhood $N_1$. It does so by computing $I(w^{'})$ for each neighbor window $w^{'} \in N_1$, and selects the best neighbor $w_{\text{best}}$  that has the highest MI in the neighborhood (lines 7-11). Then, SYCOS$_{\text{BU}}$ determines whether $w_{\text{best}}$ is better than the current solution $w$ using the following policies:
\\
$\bullet$ P1: If $I_{w_{\text{best}}} \ge I_w$ or $I_{w_{\text{best}}} \ge I_{w_h}$ where $w_h \in L_h$, then $w_{\text{best}}$ is better than $w$ and thus, $w$ is replaced by $w_{\text{best}}$ (lines 13-15).
\\
$\bullet$ P2: If $I_{w_{\text{best}}} < I_w$ and $I_{w_{\text{best}}} < I_{w_h}$ where $w_h \in L_h$, there is no better solution found in the considered neighborhood, thus, no improvement can be made (lines 16-18).

In policy P1, a better solution is found, thus, SYCOS$_{\text{BU}}$ moves to this new solution $w=w_{\text{best}}$, and repeats the exploration process on the new $w$. In policy \textit{P2}, no better solution is found. In this case,  SYCOS$_{\text{BU}}$ checks the \textit{stopping conditions} and determines whether it should continue exploring a larger neighborhood or stopping the search immediately. Ideally, SYCOS$_{\text{BU}}$ will stop immediately when no better solution can be found in the current neighborhood. However, to avoid situations where a temporary setback stops the search too early, we use an idle period to measure the number of non-improvements observed. The search will stop when it reaches the pre-defined max idle period $T_{\text{maxIdle}}$ (line 6).
The value $I_w$ at the stopping point is the locally maximal value of the target function, and is compared against the correlation threshold $\sigma$. The window $w$ is inserted into the result set $S$ if $I_w \ge \sigma$ (line 24). 

When SYCOS$_{\text{BU}}$ stops, the time series might not be scanned entirely. In that case, SYCOS$_{\text{BU}}$ restarts on the remaining part of the data, looking for new local optima, until the entire time series is scanned (line 25). 


\textit{Initial solution:} The first window $w_0$ of SYCOS$_{\text{BU}}$ can be at the beginning of the time series, or at an arbitrary position within the time series. 

\textit{The history list $L_h$:} SYCOS$_{\text{BU}}$ maintains a history list $L_h$ of the most recently accepted solutions and uses it to justify the goodness of a solution candidate. In our implementation, SYCOS$_{\text{BU}}$ follows the \textit{random} policy when selecting and updating an item in $L_h$: to justify a solution candidate, SYCOS$_{\text{BU}}$ randomly picks an element from the list and compares to the candidate; similarly, a random element from $L_h$ is selected for updating. This \textit{random} policy helps SYCOS$_{\text{BU}}$ escape \textit{plateaus} in the search space by creating "randomness" while it moves.

\begin{figure}[!t]
	\hspace{-0.1in}
	\begin{minipage}{0.4\linewidth}
		\centering
		\vspace{0.15in}
		\includegraphics[scale=0.145]{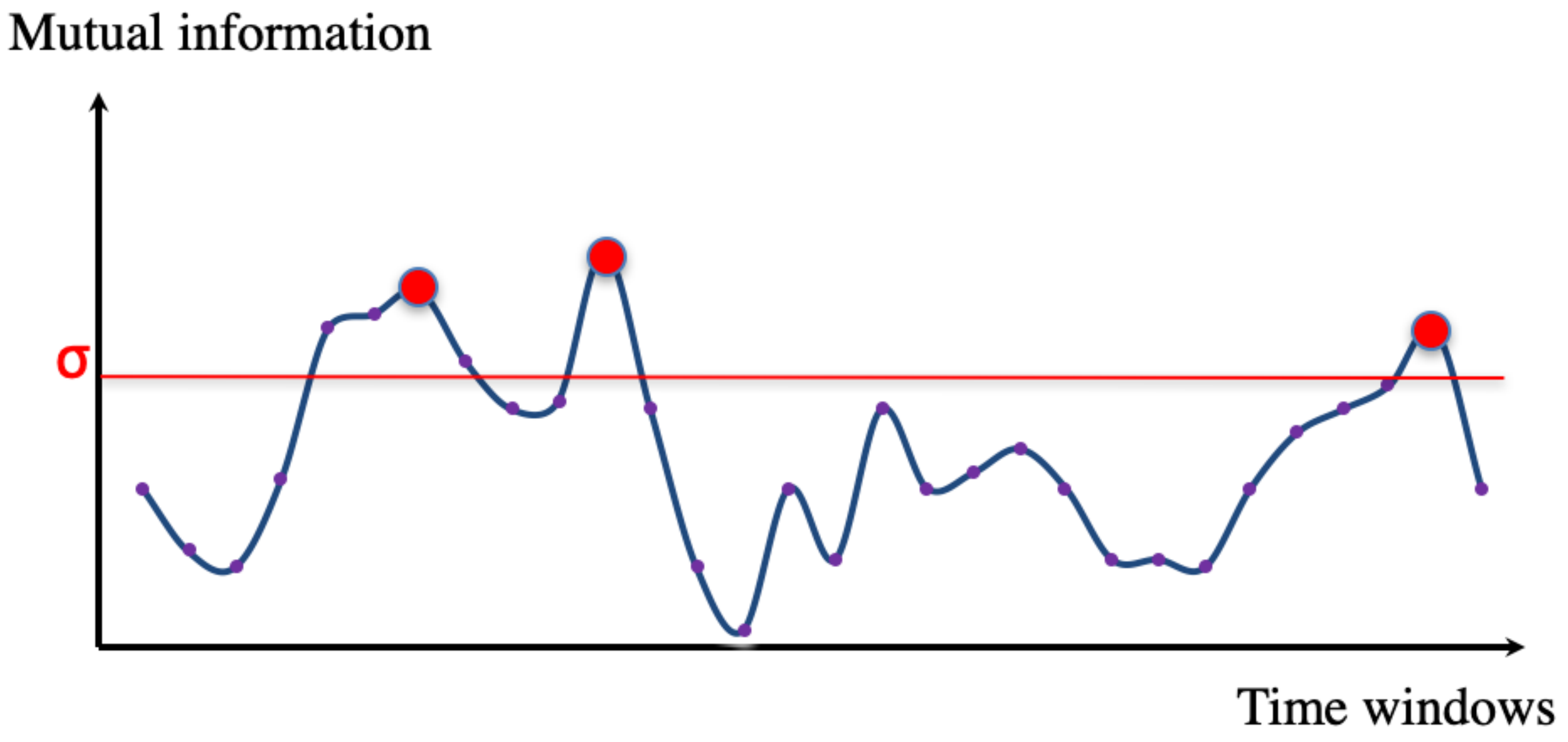}
		\caption{\small MI fluctuation}
		\label{fig:mifluctuation}
	\end{minipage}
	\hspace{0.3in}
	\begin{minipage}{0.6\linewidth}
		\hspace{0.15in}
		\begin{tikzpicture}[scale=0.45]
			\begin{axis}[
			xlabel=End Index, 
			ylabel=Mutual Information,
			ymin=0, ymax=0.3,
			]
			\addplot[blue, mark=none] table[x=end, y=mi] {mi_start0};
			\addplot[red, mark=none] table[x=end, y=mi] {mi_start5};
			\legend{Start at 0, Start at 5}
			\end{axis}
		\end{tikzpicture}
		\vspace{-0.1in}
		\caption{\small MI with different starts}
		\label{fig:mi3}
	\end{minipage}
\end{figure}

\begin{algorithm}[!t]
	\footnotesize
	\caption{SYCOS$_{\text{BU}}$: Bottom-Up Search}
	\begin{algorithmic}[1]
		\Statex{{\bf Input:}} $(X_T,Y_T)$: \text{pair of time series} 
		\Statex{{\bf Params:} $\sigma$: \text{correlation threshold,}}
		\Statex{\hspace{1.17cm}$\delta$: \text{moving step,}}
		\Statex{\hspace{1.17cm}$s_{\min}$, $s_{\max}$: \text{minimum and maximum window sizes}}  
		\Statex{{\bf Output:} $\mathit{S}$: \text{set of non-overlapping windows with MI $ \ge \sigma$}}
		
		\While{$(X_T,Y_T)$ are not completely explored}
		\State Produce an initial solution $w \gets w_0$ with $\abs{w_0}=s_{\min}$ 
		\State Compute $I(w_0)$ \Comment{Evaluate the goodness of $w_0$}
		\State For all $k \in {0,...,h-1}$: $L_{h_k} \gets I(w_0)$ \Comment{Initialize $L_h$} 
		\State $T_{idle} \gets 0$ \Comment{Initialize the idle time}
		\While {$T_{idle} \le T_{maxIdle}$}
		\State $N \gets \mathit{Neighbors}(w)$ \Comment{Identify the neighbors of $w$}
		\For  {$w^{'} \in N$}
		\State Compute $I(w^{'})$ \Comment{Evaluate the goodness of $w^{'}$} 
		\EndFor 
		\State $w_{best} \gets$ \textit{BestNeighbor($N$)}\Comment{Select the best neighbor in $N$} 
		\State $w_h \gets$ \textit{random.get(\textbf{$L_h$})} \Comment{Randomly select an element in \textbf{$L_h$}}
		\If{$I_{w_{best}}$ $>$ $I_{w_h}$ \textbf{or} $I_{w_{best}}$ $>$ $I_w$}
		\State $w \gets w_{best}$ \Comment{Accept the candidate}
		\State $T_{idle} \gets 0$ \Comment{Reset the idle time}
		\Else 
		\State $w \gets w$ \Comment{Reject the candidate}
		\State $T_{idle} \gets T_{idle}+1$ \Comment{Increase the idle time}
		\EndIf
		\If{$I_w$ $>$ $I_{w_h}$}
		\State $w_{h} \gets w \wedge I_{w_{h}}:=I_w$ \Comment{Update the history list}
		\EndIf
		\EndWhile
		\State \textbf{if} $I_w$ $\ge \sigma$ \textbf{then} \text{$S$.insert(w)} \textbf{end if}  
		\State \textit{$S \gets$ SYCOS$_{\text{BU}}$}$(X^{'}_T,Y^{'}_T)$ \Comment{Recursively call SYCOS$_{\text{BU}}$}
		\EndWhile
		\State \textbf{return} $S$
		
	\end{algorithmic}
	\label{alg:bottomupsearch}
\end{algorithm}

\subsection{iSYCOS: Integrating SYCOS$_{\text{TD}}$ and SYCOS$_{\text{BU}}$} \label{sec:selectionanalysis}
In this section, we integrate SYCOS$_{\text{TD}}$ and SYCOS$_{\text{BU}}$ into a single integrated framework iSYCOS that automatically analyzes and selects the most efficient search approach among SYCOS$_{\text{BU}}$ and SYCOS$_{\text{TD}}$ for a given pair of input time series. 
The selection process consists of two steps: (1) random sampling and (2) selection analysis. 

\subsubsection{Random sampling} On the input time series, we perform random sampling to select multiple data partitions that will be analyzed in the selection analysis step. First, we divide the input time series of length $N$ into $M$ smaller partitions, each of length $\lfloor \frac{N}{M} \rfloor$. Next, we randomly select $m$ partitions from $M$ and use them as the input for the next step. 

\subsubsection{Selecting between SYCOS$_{\text{TD}}$ and SYCOS$_{\text{BU}}$} The integrated algorithm iSYCOS evaluates the efficiency of SYCOS$_{\text{TD}}$ and SYCOS$_{\text{BU}}$ on the $m$ partitions, taking into consideration two factors: the runtime and the size of extracted windows. More specifically, on each data partition $p_i$ obtained from the random sampling, iSYCOS executes SYCOS$_{\text{TD}}$ and SYCOS$_{\text{BU}}$ on $p_i$, and records their runtimes $t^{i}_{\text{TD}}$, $t^{i}_{\text{BU}}$, and the list of correlated windows $W^{i}_{\text{TD}}$, $W^{i}_{\text{BU}}$. Finally, iSYCOS computes the average runtimes $\bar{t}_{\text{TD}}$ and $\bar{t}_{\text{BU}}$ on $m$ partitions.


From the $m$ window lists $W^{i}_{\text{TD}}$ extracted by SYCOS$_{\text{TD}}$, iSYCOS computes the size $\mid$$w_i$$\mid$ for each window $w_i \in W^{i}_{\text{TD}}$. Next, based on the window size, iSYCOS classifies $w_i$ into small or large windows: $w_i$ is small if $\mid$$w_i$$\mid \le \rho \times s_{\max}$ where $\rho \in (0,1]$ is a user-defined parameter, and large otherwise. Finally, iSYCOS counts the number of small windows $n^{\text{S}}_{\text{TD}}$, and the number of large windows $n^{\text{L}}_{\text{TD}}$ for SYCOS$_{\text{TD}}$.  Similarly, $n^{\text{S}}_{\text{BU}}$ and $n^{\text{L}}_{\text{BU}}$ are counted for SYCOS$_{\text{BU}}$. 


Based on the average runtimes and the number of small and large windows, 
iSYCOS evaluates the efficiency of SYCOS$_{\text{TD}}$ and SYCOS$_{\text{BU}}$ on the input time series by computing the efficiency scores for SYCOS$_{\text{TD}}$ and SYCOS$_{\text{BU}}$, weighting two factors, the runtime and the number of extracted windows as 
\begin{equation}
	\text{score}_{\text{TD}} = \alpha \times \frac{1}{\bar{t}_{\text{TD}}}+ (1-\alpha) \times n^{\text{L}}_{\text{TD}}
	\label{eq:effscore_TD}
\end{equation}

\begin{equation}
\text{score}_{\text{BU}} = \alpha \times \frac{1}{\bar{t}_{\text{BU}}}+ (1-\alpha) \times n^{\text{S}}_{\text{BU}}
\label{eq:effscore_BU}
\end{equation}
where $\alpha$ and $(1-\alpha)$ are the weights placed on the runtime, and the number of large windows of SYCOS$_{\text{TD}}$ or the number of small windows of SYCOS$_{\text{BU}}$, respectively. 

The top-down score computed in Eq. \eqref{eq:effscore_TD} is the weighted sum of the inverse of the average runtime, and the number of large windows. The higher the $\text{score}_{\text{TD}} $, the lower the average runtime and the more large windows extracted by SYCOS$_{\text{TD}}$. Similarly for the bottom-up score, the higher the $\text{score}_{\text{BU}}$, the lower the average runtime and the more small windows extracted by SYCOS$_{\text{BU}}$. By adjusting the value of $\alpha$, users can put more weight on the factor that they prefer to be more important, i.e., the runtime or the extracted windows.

However, since the runtime and the number of extracted windows are measured by different scales, we normalize them between $[0-1]$ so that they are weighted properly: 
\begin{equation}
\widetilde{t}_{\text{TD}}=\frac{\bar{t}_{\text{TD}}}{\bar{t}_{\text{BU}}+\bar{t}_{\text{TD}}};\hspace{0.1in}
\widetilde{t}_{\text{BU}}=\frac{\bar{t}_{\text{BU}}}{\bar{t}_{\text{BU}}+\bar{t}_{\text{TD}}}
\label{eq:normalizeruntime}
\end{equation}

\begin{equation}
\widetilde{n}^{\text{L}}_{\text{TD}}=\frac{n^{\text{L}}_{\text{TD}}}{n^{\text{S}}_{\text{BU}}+n^{\text{L}}_{\text{TD}}};\hspace{0.1in}
\widetilde{n}^{\text{S}}_{\text{BU}}=\frac{n^{\text{S}}_{\text{BU}}}{n^{\text{S}}_{\text{BU}}+n^{\text{L}}_{\text{TD}}}
\label{eq:normalizedwindows}
\end{equation}
where $\widetilde{t}_{\text{TD}}$ and $\widetilde{t}_{\text{BU}}$ are the normalized average runtimes, and $\widetilde{n}^{\text{L}}_{\text{TD}}$ and $\widetilde{n}^{\text{S}}_{\text{BU}}$ are the normalized number of large and small windows of SYCOS$_{\text{TD}}$ and SYCOS$_{\text{BU}}$, respectively.

Finally, we replace the normalized factors computed in Eqs. \eqref{eq:normalizeruntime} and \eqref{eq:normalizedwindows} into Eqs. \eqref{eq:effscore_TD} and \eqref{eq:effscore_BU}, and obtain the normalized efficiency scores for SYCOS$_{\text{TD}}$ and SYCOS$_{\text{BU}}$ as
\begin{equation}
	\text{nscore}_{\text{TD}} = \alpha \times \frac{1}{\widetilde{t}_{\text{TD}}}+ (1-\alpha) \times \widetilde{n}^{\text{L}}_{\text{TD}}
	\label{eq:normalizedeffscore_TD}
\end{equation}
\begin{equation}
\text{nscore}_{\text{BU}} = \alpha \times \frac{1}{\widetilde{t}_{\text{BU}}}+ (1-\alpha) \times \widetilde{n}^{\text{S}}_{\text{BU}}
\label{eq:normalizedeffscore_BU}
\end{equation}

Using $\text{nscore}_{\text{TD}}$ and $\text{nscore}_{\text{BU}}$, iSYCOS selects SYCOS$_{\text{TD}}$ over SYCOS$_{\text{BU}}$ if $\text{nscore}_{\text{TD}} > \text{nscore}_{\text{BU}}$, and selects SYCOS$_{\text{BU}}$ over SYCOS$_{\text{TD}}$ otherwise.

Algorithm \ref{alg:iSYCOS} provides the outlines of integrated iSYCOS framework. First, it starts with random sampling to select $m$ partitions for selection analysis (line 1). Lines 2-5 execute SYCOS$_{\text{TD}}$ and SYCOS$_{\text{BU}}$ on each selected data partition and record their runtimes and extracted windows. Lines 6-10 compute the efficiency scores which decide which search method is selected (lines 11-15). 

\begin{algorithm}[!t]
	\footnotesize
	\caption{iSYCOS: Integrated SYCOS}
	\begin{algorithmic}[1]
		\Statex{{\bf Input:}} $(X_T,Y_T)$: \text{pair of time series} 
		\Statex{{\bf Params:} $\sigma$: \text{correlation threshold,}}
		\Statex{\hspace{1.17cm}$s_{\min}$, $s_{\max}$: \text{minimum and maximum window sizes,}} 
		\Statex{\hspace{1.17cm}$m$, $M$: \text{number of selected and total sub-partitions,}} 
		\Statex{\hspace{1.17cm}$\delta_{\text{TD}}$: \text{sliding step of SYCOS$_{\text{TD}}$,}} 
		\Statex{\hspace{1.17cm}$\delta_{\text{BU}}$: \text{moving step of SYCOS$_{\text{BU}}$}}  
		\Statex{\hspace{1.17cm}$\alpha$: \text{weight placed on runtime factor}} 
		\Statex{{\bf Output:} $\mathit{S}$: \text{set of non-overlapping windows with MI $ \ge \sigma$}}
		
		\State $mList \gets \text{randomSampling(M)}$ \Comment{select $m$ partitions from $M$}
		\For  {$m_i \in mList$}
		\State $t_{\text{TD}}^{i}$, $W_{\text{TD}}^{i}$ $\gets$ SYCOS$_{\text{TD}}(m_i)$ \Comment{Top-down search}
		\State $t_{\text{BU}}^{i}$, $W_{\text{BU}}^{i}$ $\gets$ SYCOS$_{\text{BU}}(m_i)$ \Comment{Bottom-up search}
		\EndFor 
		\State $\bar{t}_{\text{TD}} \gets \text{averageRuntime}(t_{\text{TD}}^{i})$ \Comment{top-down average runtime}
		\State $\bar{t}_{\text{BU}} \gets \text{averageRuntime}(t_{\text{BU}}^{i})$ \Comment{bottom-up average runtime}
		\State $n^{\text{S}}_{\text{TD}}$, $n^{\text{L}}_{\text{TD}}$ $\gets$ $\text{countWindows}(W_{\text{TD}}^{i})$, $i=1,...,m$
		\State $n^{\text{S}}_{\text{BU}}$, $n^{\text{L}}_{\text{BU}}$ $\gets$ $\text{countWindows}(W_{\text{BU}}^{i})$, $i=1,...,m$
		\State $\text{nscore}_{\text{TD}}$, $\text{nscore}_{\text{BU}}$ $\gets$ $\text{normalizedScore}(\bar{t}_{\text{TD}}$, $n^{\text{L}}_{\text{TD}}$, $\bar{t}_{\text{BU}}$, $n^{\text{S}}_{\text{BU}},\alpha)$
		\If{$\text{nscore}_{\text{TD}}$ $>$ $\text{nscore}_{\text{BU}}$}
		\State S $\gets$ SYCOS$_{\text{TD}}(X_T,Y_T,\sigma,s_{\min},s_{\max},\delta_{\text{TD}})$ \Comment{select top-down}
		\Else 
		\State S $\gets$ SYCOS$_{\text{BU}}(X_T,Y_T,\sigma,s_{\min},s_{\max},\delta_{\text{BU}})$ \Comment{select bottom-up}
		\EndIf
		\State return S
	\end{algorithmic}
	\label{alg:iSYCOS}
\end{algorithm}

\subsubsection{Space complexity analysis} The time and space complexities of SYCOS$_{\text{TD}}$ and SYCOS$_{\text{BU}}$ depend on the number of feasible windows $n_w$ in the search space, and the time $t_w$ to compute the MI value of a window. Here, we analyze the space complexity of SYCOS$_{\text{TD}}$ and SYCOS$_{\text{BU}}$, and save the discussion of time complexity for Section \ref{sec:implementation}.

For SYCOS$_{\text{TD}}$, the total number of feasible windows in the search space is $n_w=\sum_{i=s_{\min}}^{s_{\max}}(N-i)$ where $N$ is the length of input time series and $s_{\max} \le N$. Thus, the space complexity of SYCOS$_{\text{TD}}$ is $O(N^2)$.

For SYCOS$_{\text{BU}}$, the total number of feasible windows in the search space is $n_w=(s_{\max}-s_{\min})N\sum_{i=1}^{T_{\text{maxIdle}}} 8 \times i$ where $N$ is the length of the input time series, and $s_{\max} \le N$. Since we have $\sum_{i=1}^{T_{\text{maxIdle}}} i \simeq T_{\text{maxIdle}}^2$, the space complexity of SYCOS$_{\text{BU}}$ is $O(N^2 T_{\text{maxIdle}}^2)$.

However, we note that the above complexity analyses is only for the worst case scenarios. In practice, due to the filtering of correlated windows during the search, the numbers of windows explored by SYCOS$_{\text{TD}}$ and SYCOS$_{\text{BU}}$ are much smaller.

\subsection{Setting the Correlation Threshold}\label{sec:thres}
Since MI measures the dependencies between variables, its magnitude indicates the strength of correlations, i.e., the larger the MI value, the stronger the correlation. However, MI is an unbounded measure, i.e., $0 \le I_w < \infty$. Thus it is difficult to interpret the correlation strength and to properly set the correlation threshold. 
Thus, we now propose a robust method to set the threshold $\sigma$ based on the \textit{normalized MI}:\vspace{0.05in}
\begin{equation}
0 \le  \tilde{I}_{w} = \frac{I_w}{H_w} \le 1
\label{eq:normalizedMI}
\end{equation}
where $I_w$ is the MI, and $H_w$ is the entropy of $w$, defined as
\begin{equation}
H_w=H(X_w;Y_w)\approx\sum_{y_i\in Y_w}\sum_{x_i\in X_w}\hat{p}(x_i,y_i)\log\hat{p}(x_i,y_i)
\end{equation}

In Eq. \eqref{eq:normalizedMI}, the window entropy $H_w$ represents the amount of uncertainty intrinsically present in the time series of $w$. Thus, $\tilde{I}_{w}$ represents the fraction of the window's uncertainty reduced by the shared information $I_w$. The larger the $\tilde{I}_{w}$, the more information is shared between the window's variables, and thus the stronger correlation. The normalized MI $\tilde{I}_{w}$ is always scaled between $[0,1]$, and thus provides an easier way for users to set the threshold $\sigma$.

	\section{Noise-based Pruning in SYCOS}\label{sec:noisetheory}
In this section, we propose a novel MI-based theory to identify noise in the input time series, and show how to incorporate this theory into SYCOS$_{\text{TD}}$ and SYCOS$_{\text{BU}}$ to filter noises and improve the search performance. 
\subsection{MI-based Theory to Identify Noises}
When SYCOS$_{\text{TD}}$ and SYCOS$_{\text{BU}}$ search for correlated windows, they might visit the same data partition multiple times. This is because data partitions can be shared between neighboring windows in SYCOS$_{\text{BU}}$, or between consecutive windows in SYCOS$_{\text{TD}}$. For example, consider the SYCOS$_{\text{BU}}$ search space in Fig. \ref{fig:searchspacebf}. Let $w=[t_s,t_e]$ (blue point) be the current window, and $N_1$ and $N_2$ be its two nearest neighborhoods. The neighboring windows of $w$ in $N_1$ and $N_2$ share data among each other. More specifically, $w^{t}_{N_{1}}=[t_{s_{N_1}}, t_{e_{N_1}}] \in N_1$ is created from $w=[t_s,t_e]$ by extending the end index $t_e$ of $w$ by a $\delta_1$ step. Thus, $w^{t}_{N_1}$ shares with $w$ the data partition $[t_s, t_e]$ which is exactly the window $w$ itself. Similarly, $w^{t}_{N_2}=[t_{s_{N_2}}, t_{e_{N_2}}] \in N_2$ is created from $w$ by enlarging the end index $t_e$ of $w$ by a $\delta_2$ step, with $\delta_2 > \delta_1$. Thus, $w^{t}_{N_2}$ shares with $w$ a partition $[t_s,t_e]$, while sharing with $w^{t}_{N_1}$ a partition $[t_s, t_{s_{N_2}}]$. 
Similarly, consecutive windows in SYCOS$_{\text{TD}}$ also share data among each other. In Fig. \ref{fig:topdownsearchtree}, when SYCOS$_{\text{TD}}$ moves from $w_0=[t_{s_0},t_{e_0}]$ to $w_1=[t_{s_1},t_{e_1}]$ in layer L$_1$, SYCOS$_{\text{TD}}$ creates an overlapping data partition $[t_{s_1}, t_{e_0}]$ that is shared between the two windows. 
  
The shared data partitions create redundant computation and degrade the search performance. To reduce this redundancy, it is important to identify shared data partitions which do not provide \textit{important information} about correlations. 
Such data partitions are considered as \textit{noises}, and can thus be removed from the search space without affecting the final search results. 
The presence of such data partitions is demonstrated in the following example. 

Consider Fig. \ref{fig:mi3} that plots the MI values of a time series pair with different start indices: the blue line starts at index $0$, the red line starts at index $5$, i.e., the data from $0$ to $5$ are not considered in the red line. From Fig. \ref{fig:mi3}, it is clear that by excluding the data partition $[0-5]$ from the search, the MI values of subsequent windows increase and are generally larger than when including it. Thus, the data partition $[0-5]$ provides no information about correlations between the times series, and can be considered as \textit{noise} that should be eliminated from the search exploration. We exploit the MI properties to establish a \textit{noise identification principle} below.

\textit{Definition 6.1} (\textit{Mixture distribution}) Let $X$ and $U$ be discrete random variables with the corresponding p.m.fs $p_X(x)$ and $p_U(u)$. Let $Z$ be a new random variable which is drawn from the same distribution as $X$ with probability $\theta$ and from the same distribution as $U$ with probability $1-\theta$ for a given $\theta \in [0,1]$. Then $Z$ is said to have a mixture distribution between $p_X(x)$ and $p_U(u)$ and is written as $Z = X \odot_\theta U$.
\begin{theorem}\label{theorem:noise}
	Let $X$, $Y$, $U$, $V$ be discrete random variables and $p_X(x)$, $p_Y(y)$, $p_U(u)$, and $p_V(v)$ be their corresponding p.m.fs.  Let $Z=X \odot _\theta U$ and $W=Y \odot_\eta V$ where $\odot$ denotes the mixture of two variables. Assume that, except for $X$ and $Y$, all other variables are mutually independent, i.e., ($U \perp V) \wedge (X \perp U) \wedge (X \perp V) \wedge (Y \perp U) \wedge (Y \perp V)$. 
	Then $I(X;Y) \ge I(Z;W)$. 
	\label{lem:noise1}
\end{theorem}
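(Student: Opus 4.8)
The plan is to realize each mixture with an explicit Bernoulli selector and then bound $I(Z;W)$ using the joint convexity of relative entropy. Concretely, I would introduce independent selectors $S \sim \mathrm{Bernoulli}(\theta)$ and $R \sim \mathrm{Bernoulli}(\eta)$, taken independent of $X,Y,U,V$, so that $Z$ equals $X$ when $S=1$ and $U$ when $S=0$, and $W$ equals $Y$ when $R=1$ and $V$ when $R=0$. Conditioning on the four values of $(S,R)$ and invoking the stated independences, the joint p.m.f. of $(Z,W)$ becomes the convex combination
\begin{equation}
p_{Z,W} = \theta\eta\, p_{X,Y} \;+\; \theta(1-\eta)\, p_X p_V \;+\; (1-\theta)\eta\, p_U p_Y \;+\; (1-\theta)(1-\eta)\, p_U p_V,
\end{equation}
where only the first component retains the dependence of the signal pair $(X,Y)$; the other three factorize into products of marginals precisely because $X \perp V$, $U \perp Y$, and $U \perp V$.

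The key observation is that the denominator appearing in the mutual information, the product of marginals $p_Z\, p_W$, is the \emph{analogous} convex combination with the \emph{same} weights. Writing $P_1,\dots,P_4 = p_{X,Y},\, p_X p_V,\, p_U p_Y,\, p_U p_V$ and letting $Q_i$ denote the product of the two marginals of $P_i$ (so $Q_1 = p_X p_Y$, $Q_2 = P_2$, $Q_3 = P_3$, $Q_4 = P_4$), and setting $\lambda = (\theta\eta,\, \theta(1-\eta),\, (1-\theta)\eta,\, (1-\theta)(1-\eta))$, one checks by expanding $p_Z = \theta p_X + (1-\theta)p_U$ and $p_W = \eta p_Y + (1-\eta)p_V$ term by term that
\begin{equation}
p_{Z,W} = \sum_{i=1}^{4} \lambda_i P_i, \qquad p_Z\, p_W = \sum_{i=1}^{4} \lambda_i Q_i.
\end{equation}

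With numerator and denominator both written as convex combinations carrying identical weights, I would invoke the joint convexity of relative entropy $D(\cdot\,\|\,\cdot)$ (Cover \& Thomas) to pull the combination outside: since $I(Z;W) = D(p_{Z,W}\,\|\,p_Z p_W)$, joint convexity gives $I(Z;W) \le \sum_i \lambda_i\, D(P_i \| Q_i)$. For $i=2,3,4$ we have $Q_i = P_i$, so those divergences vanish, while $D(P_1\|Q_1) = D(p_{X,Y}\,\|\,p_X p_Y) = I(X;Y)$. This yields the sharpened bound $I(Z;W) \le \theta\eta\, I(X;Y) \le I(X;Y)$, since $\theta,\eta \in [0,1]$, which is exactly the claim.

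The main obstacle is bookkeeping rather than analysis: one must verify carefully that the five independence hypotheses are precisely what force three of the four conditional joints to factor, and that the \emph{same} four weights reassemble $p_Z p_W$ from the $Q_i$ — this weight-matching between numerator and denominator is what makes the convexity step collapse three divergences to $0$ and leave only $I(X;Y)$. A secondary point worth stating explicitly is that the two mixtures are formed with \emph{independent} selectors, $S \perp R$, which is what produces the product weights $P(S=s,R=r)=P(S=s)P(R=r)$ and is implicit in reading $Z = X \odot_\theta U$ and $W = Y \odot_\eta V$ as two separate mixing operations.
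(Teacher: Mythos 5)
Your proof is correct, but it reaches the bound through a different key lemma than the paper. The paper's proof computes $I(Z;W)$ \emph{exactly}: it splits the double sum defining $I(Z;W)$ into four pieces according to whether $z$ ranges over $\mathcal{R}_X$ or $\mathcal{R}_U$ and $w$ over $\mathcal{R}_Y$ or $\mathcal{R}_V$, cancels the mixture weights inside each logarithm, and arrives at the identity
\begin{equation*}
I(Z;W)=\theta\eta\, I(X;Y)+(1-\theta)\eta\, I(U;Y)+\theta(1-\eta)\, I(X;V)+(1-\theta)(1-\eta)\, I(U;V),
\end{equation*}
after which the independence hypotheses kill the last three terms and $\theta\eta\le 1$ gives the claim. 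Note that this computation silently assumes the component supports are distinguishable: writing $p_Z(x)=\theta p_X(x)$ and summing over $\mathcal{R}_X$ and $\mathcal{R}_U$ separately is valid only when $\mathcal{R}_X\cap\mathcal{R}_U=\emptyset$ (and likewise for $\mathcal{R}_Y$, $\mathcal{R}_V$); on an overlap one has $p_Z(x)=\theta p_X(x)+(1-\theta)p_U(x)$ and the four-region split of the sum breaks down. You instead express both $p_{Z,W}$ and $p_Z\,p_W$ as convex combinations with the \emph{identical} weight vector $\lambda=(\theta\eta,\theta(1-\eta),(1-\theta)\eta,(1-\theta)(1-\eta))$ and invoke joint convexity of relative entropy, so that three divergences vanish (since $Q_i=P_i$ for $i=2,3,4$) and only $\theta\eta\,D(p_{X,Y}\,\|\,p_X p_Y)=\theta\eta\,I(X;Y)$ survives. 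Your route buys genuine extra generality: it requires no disjointness of supports, so it covers the case where $X$ and $U$ (or $Y$ and $V$) take values in the same range --- exactly the situation in the intended application, where consecutive windows carry real-valued data from the same domain. What it gives up is the exact formula: the paper's argument, where its implicit support assumption holds, yields the equality $I(Z;W)=\theta\eta\,I(X;Y)$, a quantitative statement about how much mixing attenuates MI, whereas convexity yields only the upper bound (which is all the theorem asserts). Your explicit observation that the two Bernoulli selectors must be independent, $S\perp R$, producing the product weights $P(S=s,R=r)=P(S=s)P(R=r)$, makes precise a hypothesis the paper uses implicitly when it writes $p_{Z,W}(x,y)=\theta\eta\,p_{X,Y}(x,y)$; stating it is a clarification, not a deviation.
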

\begin{proof}
	$Z$ and $W$ are the two mixed variables: $Z=X \odot_\theta U$ and $W=Y \odot_\eta V$. 
	Then, for a value of $x$ drawn according to $p_X(x)$ and a value of $u$ drawn according to $p_U(u)$, we can write the probabilities for $Z$ as follows:
	\begin{align}\vspace{-0.5in}
	\small
	p_Z(x)&=P(Z =X)p_X(x) = \theta p_X(x) \\
	p_Z(u)&=P(Z =U)p_U(u)=(1-\theta) p_U(u)
	\end{align}	
Similarly, we have:
	\begin{align}
	\small
	p_W(y)&=P(W =Y)p_Y(y) = \eta p_Y(y)\\
	p_W(v)&=P(W =V)p_V(v)=(1-\eta) p_V(v)
	\end{align}
	Then, we have the following joint probabilities:
	\begin{align}
	\small 
	p_{Z,W}(x,y)&=\theta \eta p_{X,Y}(x,y)\\
	p_{Z,W}(x,v)&=\theta (1-\eta) p_{X,V}(x,v)\\
	p_{Z,W}(u,y)&=(1-\theta) \eta p_{U,Y}(u,y)\\
	p_{Z,W}(u,v)&=(1-\theta) (1- \eta) p_{U,V}(u,v)
	\end{align} 
	We have the MI between $X$ and $Y$ as
	\begin{equation}
	\small 
	I(X;Y)=\sum_{y}\sum_{x}p_{X,Y}(x,y)\log\frac{p_{X,Y}(x,y)}{p_X(x)p_Y(y)}
	\end{equation} 
	And the MI between $Z$ and $W$ as 
	\begin{equation}
	\small 
	I(Z;W)=\sum_{w} \sum_{z} p_{Z,W}(z,w)\log\frac{p_{Z,W}(z,w)}{p_Z(z)p_W(w)}
	\label{eq:MIconcatenate}
	\end{equation}
	\\
	Since $Z$ can take the values in $\cal{R}_X$ if $z$ is drawn from $X$, and in $\cal{R}_U$ if $z$ is drawn from $U$ (similarly for $W$), then from Eq. \eqref{eq:MIconcatenate}, it follows that:
	\begin{equation}
	\small
	\begin{split}
I(Z;W)&=\sum_{w \in \cal{R}_Y} \sum_{z \in \cal{R}_X} p_{Z,W}(x,y)\log\frac{p_{Z,W}(x,y)}{p_Z(x)p_W(y)} \\
	& + \sum_{w \in \cal{R}_Y} \sum_{z \in \cal{R}_U} p_{Z,W}(u,y)\log\frac{p_{Z,W}(u,y)}{p_Z(u)p_W(y)} \\
	& + \sum_{w \in \cal{R}_V} \sum_{z \in \cal{R}_X} p_{Z,W}(x,v)\log\frac{p_{Z,W}(x,v)}{p_Z(x)p_W(v)} \\
	& + \sum_{w \in \cal{R}_V} \sum_{z \in \cal{R}_U} p_{Z,W}(u,v)\log\frac{p_{Z,W}(u,v)}{p_Z(u)p_W(v)} \\
	&= \sum_{y \in \cal{R}_Y} \sum_{x \in \cal{R}_X} \theta \eta p_{X,Y}(x,y)\log\frac{\theta \eta  p_{X,Y}(x,y)}{\theta p_X(x) \eta p_Y(y)} \\
	& \hspace*{-1.2cm}+ \sum_{y \in \cal{R}_Y} \sum_{u \in \cal{R}_U} (1-\theta)\eta p_{U,Y}(u,y)\log\frac{(1-\theta)\eta p_{U,Y}(u,y)}{(1-\theta)p_U(u) \eta p_Y(y)} \\
	&\hspace*{-1.2cm} + \sum_{v \in \cal{R}_V} \sum_{x \in \cal{R}_X} \theta (1-\eta)p_{X,V}(x,v)\log\frac{\theta (1-\eta) p_{X,V}(x,v)}{\theta p_X(x) (1-\eta)p_V(v)} \\
	&\hspace*{-1.2cm} + \sum_{v \in \cal{R}_V} \sum_{u \in \cal{R}_U} (1-\theta)(1-\eta)p_{U,V}(u,v)\log\frac{(1-\theta)(1-\eta)p_{U,V}(u,v)}{(1-\theta)p_U(u)(1-\eta)p_V(v)} \\
	\end{split}	
	\label{eq:MIconcatenate1}
	\end{equation}
	Eq. \eqref{eq:MIconcatenate1} can be rewritten as
	\begin{equation}
	\small 
	\begin{split}
	I(Z;W)&=\theta \eta I(X;Y) + (1-\theta)\eta I(U;Y)\\
	&+\theta(1-\eta)I(X;V)+(1-\theta)(1-\eta)I(U;V)
	\end{split}	
	\label{eq:MIconcatenate2}
	\end{equation}
	Since we assume: ($U \perp V) \wedge (X \perp U) \wedge (X \perp V) \wedge (Y \perp U) \wedge (Y \perp V)$, this leads to: $I(U;Y)=0 \wedge I(X;V)=0 \wedge I(U;V)=0$. 
	\\
	Thus, Eq. \eqref{eq:MIconcatenate2} becomes
	\begin{equation}
	\small
	I(Z;W)=\theta \eta I(X;Y) 
	\label{eq:MIconcatenate3}
	\end{equation}
	where $\theta \le 1  \wedge \eta \le 1$. This leads to: $ I(X;Y) \ge I(Z;W) $.
\end{proof}
Theorem \ref{lem:noise1} says that, if $U$ and $V$ are independent from each other and from $X$ and $Y$, then adding them to $X$ and $Y$ will bring more uncertainty to $(X,Y)$, in other words, they reduce the shared information $I(X;Y)$. Based on Theorem \ref{lem:noise1}, we define \textit{noise} as follows.
\\
\textit{Definition 6.2} (\textit{Noise}) Let $(X_T,Y_T)$ be a pair of time series, $\sigma$ be the correlation threshold, and $\tau$ ($0 \le \tau < \sigma$) be the \textit{noise threshold}. Consider two consecutive windows $w_{X,Y}$ and $w^{'}_{X,Y}$ of $(X_T,Y_T)$, and $w^{''}_{X,Y}=w_{X,Y} \odot w^{'}_{X,Y}$ is the mixture of $w_{X,Y}$ and $w^{'}_{X,Y}$. Assume that $I_{w_{X,Y}} > 0$. Then $w^{'}_{X,Y}$ is called \textit{noise} w.r.t.  $w_{X,Y}$ iff $I_{w^{'}_{X,Y}} <  \tau$ $\wedge$ $I_{w^{''}_{X,Y}} < I_{w_{X,Y}}$.

The following sections apply the \textit{noise} identification principle in Theorem \ref{theorem:noise} and Def 6.2 to prune the search space of SYCOS$_{\text{TD}}$ and SYCOS$_{\text{BU}}$, to speed up the search.

\subsection{Noise-Based Pruning for SYCOS$_{\text{TD}}$}\label{sec:improveTD} 
Recall that SYCOS$_{\text{TD}}$ uses a sliding window technique to identify correlations in each layer. For example, consider the windows $w_0$, $w_1$, $w_2$ at L$_1$ in Fig. \ref{fig:topdownsearchtree}. Here, $w_0$ is an uncorrelated window, thus it is shifted by a $\delta$-step to form $w_1$, creating an overlapping data partition $[t_{s_1},t_{e_0}]$ between $w_0$ and $w_1$. The next window $w_1$ is also uncorrelated, and is shifted to form $w_2$, creating the overlapping part $[t_{s_2},t_{e_1}]$. 
By shifting one window to create another window, SYCOS$_{\text{TD}}$ creates overlapping data that will be revisited multiple times during the search process. 
To avoid multiple shiftings where overlapping data are repeatedly visited, the noise theory is applied to shorten this shifting step. 

More specifically, when $w_1$ is created by shifting $w_0$ by a $\delta$-step, $w_1$ is divided into two sub-windows: $w_1=w_o \odot w_{\delta}$ where $w_o=[t_{s_1}, t_{e_0}]$ represents the overlapping data, and $w_{\delta}=[t_{e_0}, t_{e_1}]$ represents the newly added data, and $\odot$ is the concatenation operator. We apply the noise theory to $w_1$ and evaluate its goodness in $2$ steps. First, $I_{w_1}$, $I_{w_o}$ and $I_{w_{\delta}}$ are computed separately. Next, $I_{w_1}$ and  $I_{w_o}$ are compared against the correlation threshold $\sigma$, while $I_{w_{\delta}}$ is compared against the noise threshold $\tau$. 
Assume that by applying Theorem \ref{lem:noise1}, we conclude that $w_{\delta}$ is \textit{noise} w.r.t. $w_o$, i.e., $I_{w_{\delta}} < \tau$ $\wedge$ $I_{w_1} < I_{w_o}$, then \textit{noise} is detected in $w_1$. In this case, instead of shifting $w_1$ to create $w_2$, the search can skip $w_1$ entirely, and move directly to $w_3$ whose start index $t_{s_3}$ is right after the end index $t_{e_1}$ of $w_1$, as in Fig. \ref{fig:topdownsearchtree}. In contrast, if \textit{noise} is not detected in $w_1$, SYCOS$_{\text{TD}}$ shifts $w_1$ to $w_2$ as usual.
By using the noise detection mechanism, the entire shifting process can be avoided, therefore speeding up the search performance. 

However, to ensure the noise theory does not remove important data by chance, we require that \textit{noise} should be consecutively detected $p$ times before canceling the shifting process. For instance, in the example above, after identifying that \textit{noise} is present in $w_1$, SYCOS$_{\text{TD}}$ cancels the shifting process only if \textit{noise} is also detected in the next $(p-1)$ consecutive time windows.

\subsection{Noise-Based Pruning for SYCOS$_{\text{BU}}$}\label{sec:improveBU} 

Recall that SYCOS$_{\text{BU}}$ starts with an initial solution $w_0$, and tries to improve $w_0$ by searching for better solutions in nearby neighborhoods. To prune irrelevant data partitions and improve the search performance during the neighborhood exploration process, we apply the noise theory using the example in Fig. \ref{fig:searchspacebf} as below. 

Assume that $w$ is the current solution of SYCOS$_{\text{BU}}$, and $w^{l}_{N_1}$, $w^{l}_{N_2}$ are its neighbors when the search moves leftwards. 
In the first exploration, the neighbor $w^{l}_{N_1}$ is considered. Since $w^{l}_{N_1}$ is created by extending the start index of $w$ by a $\delta_1-$step to the left, we have: $w^{l}_{N_1}= w_{\delta_1} \odot w$ where $w_{\delta_1}$ is the extension to be added to $w$. Also assume that by applying the noise theory to $w$, $w_{\delta_1}$, and $w^{l}_{N_1}$, we conclude that $w_{\delta_1}$ is \textit{noise} w.r.t. $w$, i.e., $I_{w_{\delta_1}} < \tau$ $\wedge$ $I_{w^{l}_{N_1}} < I_w$. In this case, \textit{noise} is detected in the leftwards direction of $w$, and thus, it is not promising to explore further its leftwards neighborhoods. Hence, in the next exploration, SYCOS$_{\text{BU}}$ does not consider $w^{l}_{N_2}$, and omits the leftwards direction entirely from its search. 

Similarly to SYCOS$_{\text{TD}}$, to avoid the risk of removing important data, we stop the search exploration towards a direction only after $p$ noise detections. In the example above, after identifying that $w_{\delta_1}$ is noise w.r.t. $w$, SYCOS$_{\text{BU}}$ only stops exploring leftwards after $p$ noise detections.

	\section{Efficient and Distributed iSYCOS}\label{sec:distributedSYCOS}
In the previous section, we discussed how to integrate the noise theory into iSYCOS to prune irrelevant data partitions and reduce the search space of SYCOS$_{\text{TD}}$ and SYCOS$_{\text{BU}}$. In this section, we propose an efficient MI computation technique to reduce the computation redundancy across windows, further improving the search performance. Furthermore, we also propose a distributed version of iSYCOS that leverages the parallelism of a computing cluster using Apache Spark to search for correlations in big datasets. 

\subsection{Efficient and Incremental MI Computation} \label{sec:implementation}

\subsubsection{Box-assisted algorithm} Recall that to compute the MI of a window, the \textit{KSG} estimator needs to estimate the population density of data points in that window. To do that, for each data point $p_i=(x_i,y_i)$ in the considered window, it first searches for the $k$ nearest neighbors of $p_i$, and then counts the number of marginal points $n_x$, $n_y$ that fall into the marginal $k$-nearest distances in each dimension $d_x$, $d_y$ \cite{kraskov2004estimating}. Among all the computations, $k$-nearest neighbor (\textit{knn}) search is the most expensive operator. Therefore, we design an efficient data structure to optimize the \textit{knn} search using the boxed-assisted method \cite{vejmelka2007mutual}. 

In the box-assisted method, the search space is divided into equal size boxes where each data point is projected into exactly one box. Each box maintains a list of points belonging to that box. When searching for the $k$-nearest neighbors of point $p_i$, first the box containing $p_i$ is found. Then, the search starts from that box and extends to nearby boxes to find the $k$ nearest points. Next, the distances to the $k^{\text{th}}$-neighbor in each dimension $d_x$, $d_y$ are determined, and the marginal points $n_x$, $n_y$ are computed by counting the number of points falling within these distances. The values of $n_x$ and $n_y$ will be used in Eq. \ref{formula:KSGmi} to estimate the final MI.

Fig. \ref{fig:influencedregion1} illustrates the boxed-assisted method and how it is used in KSG to compute the MI of a window that contains $7$ data points $p_1,...,p_7$. Consider point $p_1$ (red) in the figure. Assume that $k=2$ is nearest neighbor parameter, and \textit{maximum norm}\footnote{\small $L_{\infty}$: $d(p_i,p_j)=\parallel (d_x,d_y) \parallel_{\max}=\max(\parallel x_i-x_j\parallel,\parallel y_i-y_j\parallel)$} is the distance metric. In this setting, the two nearest neighbors of $p_1$ are $p_2$ and $p_3$ (green), and the nearest distances in each dimension are $dx$ and $dy$ (shown in the figure). The nearest distances allow KSG to form the marginal regions (in gray shade), from which the marginal counts are computed. In this case, for point $p_1$, the marginal counts are $n_x=3$ (including $p_2,p_3,p_5$), and $n_y=3$ (including  $p_2,p_3,p_4$). A similar process is applied to other points in the window, and the final MI is computed by accumulating the obtained $n_x$ and $n_y$ values.

\begin{figure}[!t]
	\centering
	\begin{subfigure}{.24\textwidth}
		\includegraphics[width=0.9\linewidth]{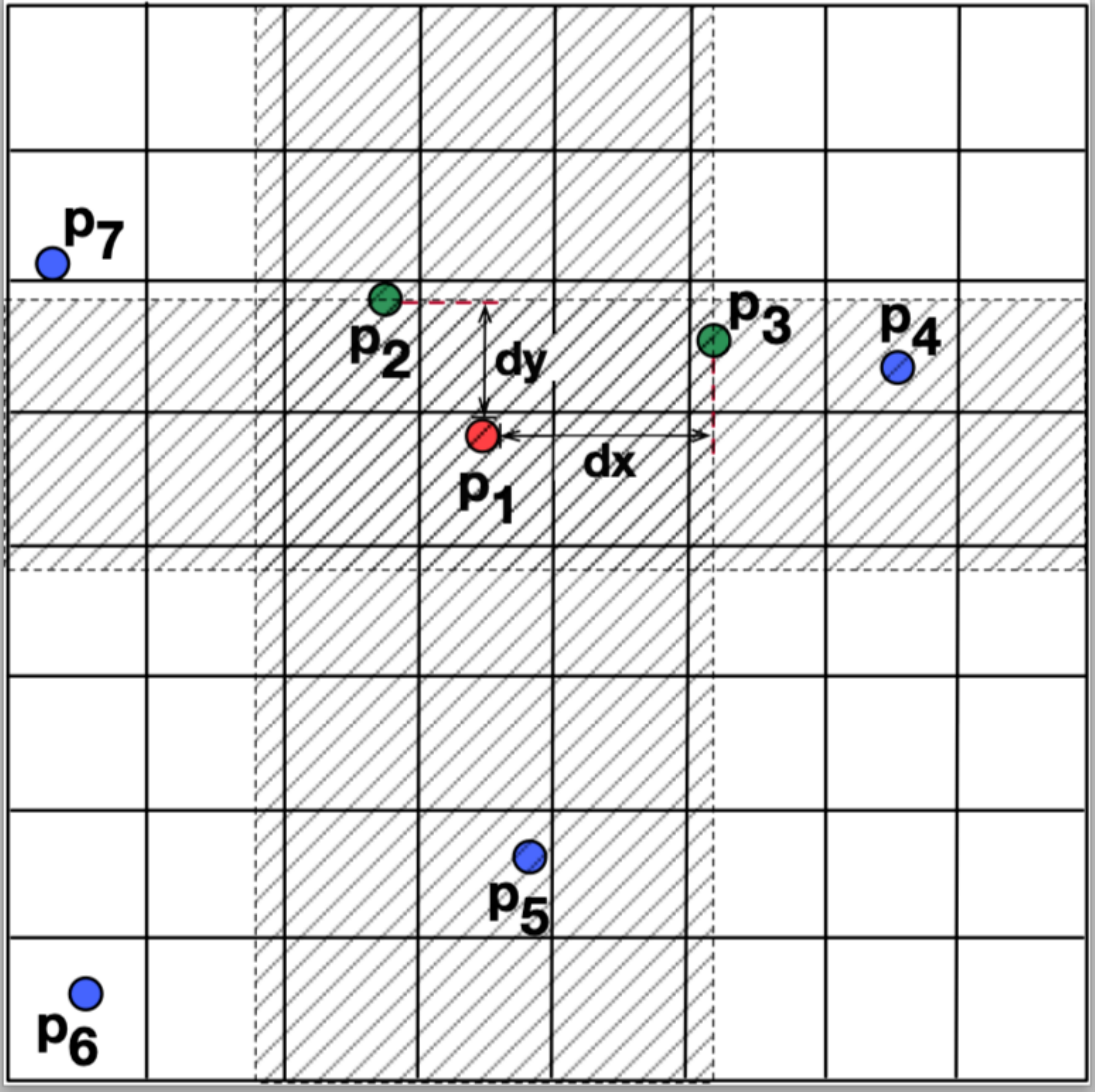}	
		\caption{\footnotesize{Box-assisted method}}
		\label{fig:influencedregion1}
	\end{subfigure}
	\begin{subfigure}{.24\textwidth}
		\includegraphics[width=0.9\linewidth]{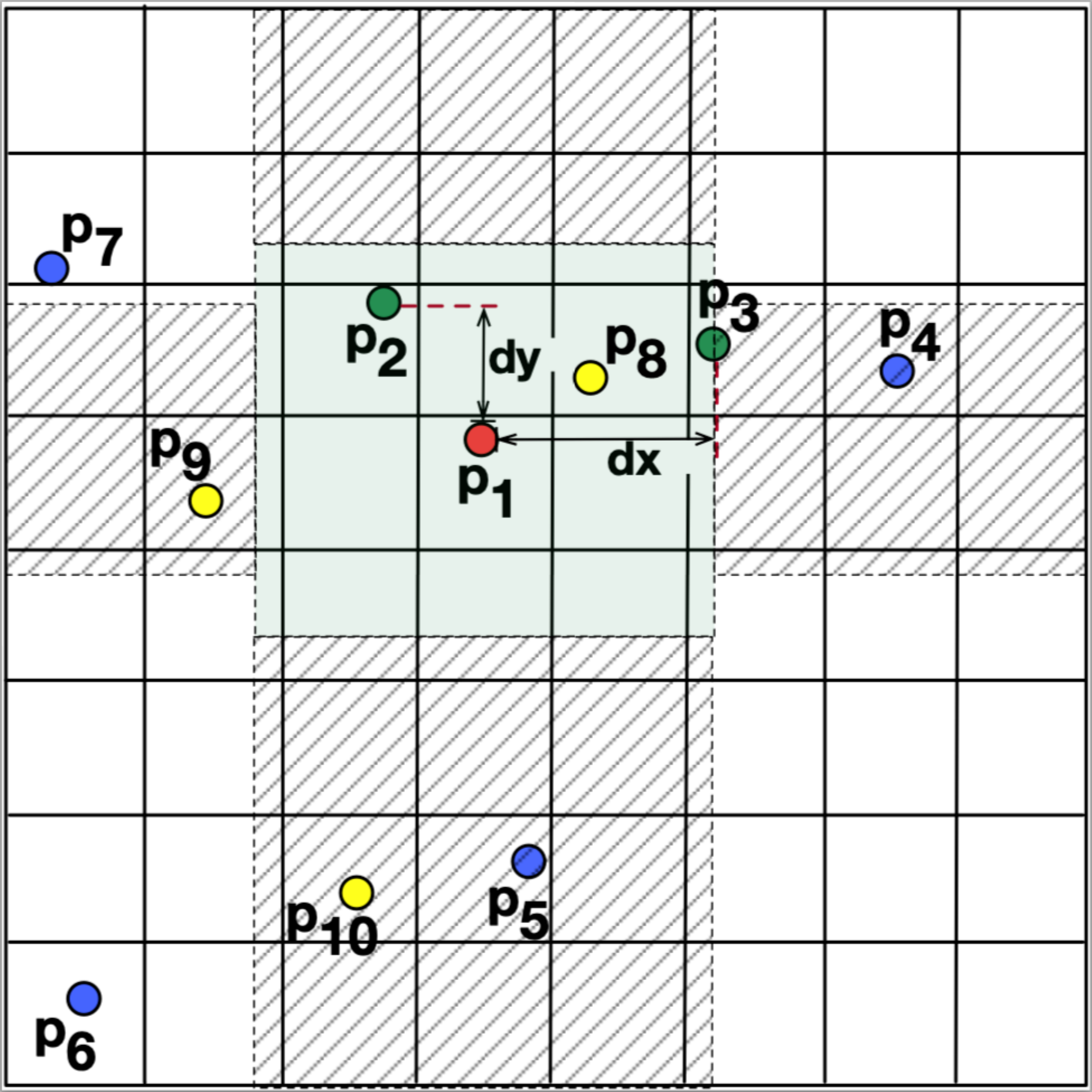}
		\caption{\footnotesize{IR (green) and IMR (gray)}}
		\label{fig:influencedregion2}
	\end{subfigure}
	\caption{Efficient and incremental MI computation} 
	\label{fig:influencedregion}
\end{figure}

\subsubsection{Incremental MI computation with influenced regions} In Section \ref{sec:noisetheory}, we show how the noise theory could avoid revisiting the same data partition multiple times. However, the MI computation across nearby windows can still be repeated. For example, in Fig. \ref{fig:searchspacebf}, when SYCOS$_{\text{BU}}$ moves leftwards, it forms the neighbor window $w^{l}_{N1}=w_\delta  \odot w$ by adding a new data partition $w_\delta$ to the current window $w$. Here, the shared data between $w$ and $w^{l}_{N1}$ are exactly $w$ itself. 
Similarly, in SYCOS$_{\text{TD}}$, the shifting from $w_1$ to $w_2$ in Fig. \ref{fig:topdownsearchtree} creates three different groups of data: \textit{group 1} from $s_1$ to $s_2$ contains the data points of $w_1$  to be removed from $w_2$, \textit{group 2} from $s_2$ to $e_1$ is overlapping data between $w_1$ and $w_2$, and \textit{group 3} from $e_1$ to $e_2$ is newly added data to $w_2$. 

To reuse the computations on overlapping data across windows, we design an optimized boxed-assisted method, ensuring that new computation is only performed on new data, while on overlapping data, only those affected by changes (by removing or adding data) are re-evaluated. 
Note that the introduced changes can be either changing the $k$-nearest neighbors of existing points or changing the marginal counts $n_x$, $n_y$. To keep track of the previous computation, we introduce the concepts of \textit{influenced region} and \textit{influenced marginal region}.

\textit{Definition 7.1} (\textit{Influenced region}) An \textit{influenced region} (\textit{IR}) of point $p_i=(x_i,y_i)$ is a square bounding box $R_{i}=(l_{i},r_{i},b_{i},t_{i})$, where $l_{i},r_{i},b_{i},t_{i}$ are its left-, right-, bottom-, and top-most indices, respectively, which are computed as $l_{i}=x_i-d$, $r_{i}=x_i+d$, $b_{i}=y_i+d$, $t_{i}=y_i-d$ where $d=\max(d_x,d_y)$.

\textit{Definition 7.2} (\textit{Influenced marginal region}) The \textit{influenced marginal region(s)} (\textit{IMR}) of point $p_i$ is (are) the marginal region(s) located within the nearest distance $d_i$ in each dimension. 

Fig. \ref{fig:influencedregion2} illustrates the \textit{influenced region} and \textit{influenced marginal region} concepts. The \textit{influenced region} of $p_1$ is the square colored in green, and the \textit{influenced marginal regions} are those with gray shade in either dimension. 

\begin{lem}\label{lem3}
	Given a window $w_i$ and a data point $p \in w_i$, a new point $o$ inserted into $w_i$ will become the new $k^{th}$-neighbor of $p$ iff $o$ is within IR of $p$. 
	\vspace{-0.08in}
\end{lem}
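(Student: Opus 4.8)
The plan is to collapse the whole statement to a single max-norm distance comparison by first identifying the influenced region with a ball. Recall that the KSG estimator uses the maximum norm $L_\infty$, and that $d_x,d_y$ are the axis projections of the distance from $p$ to its current $k^{\text{th}}$ nearest neighbor, so that $d=\max(d_x,d_y)$ equals the $L_\infty$ distance $d_k$ from $p$ to that $k^{\text{th}}$ neighbor. By Definition 7.1 the influenced region is the box $IR(p)=[x_p-d,\,x_p+d]\times[y_p-d,\,y_p+d]$, hence a point $o=(x_o,y_o)$ lies in $IR(p)$ iff $\max(|x_o-x_p|,|y_o-y_p|)\le d$, i.e. iff $\|o-p\|_\infty\le d_k$. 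This reformulation is the crux: once $IR(p)=\{o:\|o-p\|_\infty\le d_k\}$ is established, both directions reduce to comparing $\|o-p\|_\infty$ against the threshold $d_k$.

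First I would prove the contrapositive of the ``only if'' direction. Suppose $o\notin IR(p)$, so $\|o-p\|_\infty>d_k$. Before insertion there are already $k$ points at $L_\infty$ distance $\le d_k$ from $p$, namely its current $k$ nearest neighbors, the $k^{\text{th}}$ of which sits at exactly $d_k$. Since $o$ is strictly farther than all of them, inserting $o$ cannot displace any of these $k$ points from the $k$-nearest set, so the $k^{\text{th}}$-neighbor of $p$ is unchanged and $o$ does not become it. Contrapositively, if $o$ becomes the new $k^{\text{th}}$-neighbor of $p$, then $o\in IR(p)$.

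Next I would establish the ``if'' direction. Suppose $o\in IR(p)$, so $\|o-p\|_\infty\le d_k$. Then $o$ is at least as close to $p$ as the former $k^{\text{th}}$ neighbor, so after insertion $o$ ranks among the $k$ closest points of $p$; this pushes the former $k^{\text{th}}$ neighbor out of the top-$k$ and makes $o$ a member of the updated $k$-nearest-neighbor set, so $o$ becomes (part of) the new $k^{\text{th}}$-neighbor structure of $p$. Combining the two directions yields the stated equivalence.

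The main obstacle is bookkeeping rather than deep argument: I must pin down precisely what ``becomes the new $k^{\text{th}}$-neighbor'' means and handle the boundary case $\|o-p\|_\infty=d_k$ together with possible ties in distance. I would state the genericity assumption that the inter-point distances are distinct (or fix a deterministic tie-breaking rule) and treat $IR(p)$ as the closed ball, so that the single threshold $d_k$ cleanly separates the two cases. With the identification of $IR(p)$ in hand, both directions then follow immediately from the monotonicity of the nearest-neighbor ranking under a single point insertion.
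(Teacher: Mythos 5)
Your proof is correct, and it matches the paper exactly insofar as the paper has anything to match: the authors declare the proofs of Lemmas \ref{lem3}--\ref{lem6} ``straightforward, thus omitted,'' and your reduction of the influenced region to the closed $L_\infty$ ball $\{o : \|o-p\|_\infty \le d_k\}$ followed by a threshold comparison in each direction is precisely the straightforward argument they intend. Your added care about the boundary case $\|o-p\|_\infty = d_k$ and about the loose phrasing ``becomes the new $k^{th}$-neighbor'' (which, read literally, fails when $o$ is closer than the former $(k-1)^{th}$ neighbor, so that the lemma must be read as ``the $k$-nearest-neighbor set of $p$ changes,'' consistent with Lemma \ref{lem4} and with how the lemma is used to trigger a re-search) goes beyond what the paper records and is handled appropriately.
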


\begin{lem}\label{lem4}
	Given a window $w_i$ and a data point $p \in w_i$, an existing point $o$ deleted from $w_i$ will change the $k$ nearest points of $p$ iff $o$ is within IR of $p$. 
	\vspace{-0.08in}
\end{lem}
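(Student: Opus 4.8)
The plan is to reduce the lemma to a clean geometric statement about the closed max-norm ball of radius $d_k$ around $p$, where $d_k$ denotes the distance from $p$ to its $k^{\text{th}}$ nearest neighbor under the maximum-norm distance $d(\cdot,\cdot)$ of the footnote. The first step is to identify the \emph{IR} of $p$ with exactly this ball. By Definition 7.1 the \emph{IR} is the square box of half-width $d=\max(d_x,d_y)$ centered at $p$; since $d_x$ and $d_y$ are the per-coordinate gaps from $p$ to its $k^{\text{th}}$ neighbor, their maximum equals the max-norm distance to that neighbor, so $d=d_k$. Hence $o\in\mathrm{IR}(p)$ iff $d(p,o)\le d_k$, which is precisely the statement that $o$ lies no farther from $p$ than the $k^{\text{th}}$ nearest neighbor, i.e. that $o$ belongs to the $k$-nearest-neighbor set of $p$.

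I would then establish the reverse direction by contraposition. Suppose $o\notin\mathrm{IR}(p)$, so $d(p,o)>d_k$. Then $o$ is strictly farther than every one of the $k$ nearest neighbors of $p$ and is therefore not a member of that set. Deleting $o$ from $w_i$ removes a point irrelevant to the ranking of the $k$ closest points: all $k$ of them are still present and remain strictly closer than any surviving point beyond $d_k$, so the $k$-nearest set of $p$ is unchanged.

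The forward direction is symmetric. Suppose $o\in\mathrm{IR}(p)$, so $d(p,o)\le d_k$, meaning $o$ occupies one of the $k$ nearest slots of $p$. Deleting $o$ vacates that slot, and restoring a full set of $k$ nearest neighbors forces the promotion of the next closest point, which sits at distance $\ge d_k$ and was previously outside the set; consequently the membership of the $k$-nearest set (and in general the value $d_k$) changes, which proves the implication.

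The main obstacle is the boundary/tie case, where $d(p,o)=d_k$ exactly and several points lie on the surface of the ball. There I must argue carefully that "changing the $k$ nearest points" is still well defined: a point at distance exactly $d_k$ is, by the closed-box convention, inside the \emph{IR}, is counted among the $k$-nearest neighbors, and its deletion necessarily alters the membership of that set since the freed slot is filled by a point at distance $\ge d_k$. Under the mild genericity assumption that the max-norm distances from $p$ are pairwise distinct --- the generic situation for real-valued sensor readings --- the $k^{\text{th}}$ neighbor is unique, no ties occur on the boundary, and both implications hold without qualification; I would state this assumption explicitly and note that the dual insertion result (Lemma~\ref{lem3}) rests on the same ball characterization.
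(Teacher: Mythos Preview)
Your argument is correct and in fact more complete than what the paper provides: the paper simply states that the proofs of Lemmas~\ref{lem3}--\ref{lem6} are ``straightforward, thus omitted.'' Your identification of the \emph{IR} with the closed $L_\infty$-ball of radius $d_k$ and the two-direction argument via membership in the $k$-nearest set is exactly the intended reasoning, and your explicit treatment of the boundary/tie case (with the genericity assumption) is a useful addition the paper does not spell out.
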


\begin{lem}\label{lem5}
	Given a window $w_i$ and a data point $p \in w_i$, a new point $o$ inserted into $w_i$ will increase the marginal count $n_x$ (or $n_y$) of $p$ iff $o$ is within \textit{IMR}$_x$ (or \textit{IMR}$_y$) of $p$.
	\vspace{-0.08in}
\end{lem}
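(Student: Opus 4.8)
\emph{Proof proposal.} The plan is to unfold the definition of the marginal count $n_x$ used by the \textit{KSG} estimator and reduce the biconditional to a plain set-membership statement, leaving a single delicate point: that the region $\textit{IMR}_x$ does not move when $o$ is inserted. First I would recall from Section~\ref{sec:implementation} and Eq.~\eqref{formula:KSGmi} that, for a point $p=(x_p,y_p)$, the count $n_x(p)$ is the number of window points $q\neq p$ whose $x$-coordinate lies within the marginal nearest distance $d_x$ of $p$, i.e., $\abs{q_x-x_p}\le d_x$. By Definition~7.2 these are exactly the points lying in $\textit{IMR}_x$, so $n_x(p)=\abs{\{\,q\in w_i\setminus\{p\}:q\in\textit{IMR}_x\,\}}$, and by symmetry the same holds for $n_y$ and $\textit{IMR}_y$; it therefore suffices to treat the $x$-case.

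Next I would argue that the insertion of $o$ leaves the region $\textit{IMR}_x$ fixed, which is where Lemma~\ref{lem3} does the work. By Definition~7.1 the influenced region IR is the square of half-width $d=\max(d_x,d_y)$ centered at $p$, and $\textit{IMR}_x$ is the marginal strip lying \emph{outside} IR; hence every point of $\textit{IMR}_x$ has max-norm distance from $p$ exceeding $d$, and in particular exceeding the $k^{\text{th}}$-neighbor distance. By Lemma~\ref{lem3}, a point inserted outside IR does not become a new $k^{\text{th}}$ neighbor of $p$, so the $k$ nearest neighbors of $p$, and with them the marginal distances $d_x$ and $d_y$, are unchanged. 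Consequently $\textit{IMR}_x$ is the same region before and after the insertion, and $n_x(p)$ is merely a count over a fixed reference region.

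With the region pinned down, both directions should be immediate. For sufficiency, if $o\in\textit{IMR}_x$ then $o$ is a newly present point distinct from $p$ with $\abs{o_x-x_p}\le d_x$, so it joins the counted set and $n_x(p)$ rises by exactly one. For necessity I would argue the contrapositive: if $o\notin\textit{IMR}_x$ then either $\abs{o_x-x_p}>d_x$, so $o$ fails the marginal condition and is never counted, or $o$ falls inside IR, in which case its effect is a change of the neighbor structure governed by Lemmas~\ref{lem3} and~\ref{lem4} rather than a marginal increment. Neither situation increases $n_x(p)$ through the marginal mechanism, so the biconditional holds; the statement for $n_y$ and $\textit{IMR}_y$ follows verbatim after exchanging $x$ and $y$.

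I expect the main obstacle to be the invariance step, i.e., certifying that $\textit{IMR}_x$ is genuinely static under the insertion. The argument closes only because the radius of IR is taken as $d=\max(d_x,d_y)$ rather than $d_x$ alone: this is exactly what forces any point capable of perturbing either marginal distance to lie inside IR, so that the complementary strip $\textit{IMR}_x$ can be treated as a fixed window whose occupancy is a simple additive count. Making this disjointness fully rigorous when $d_x\neq d_y$ is the one place that will require care; the counting steps themselves are routine, and the whole lemma is best read as the complement of Lemma~\ref{lem3}, handling precisely those insertions that leave the neighbor structure intact.
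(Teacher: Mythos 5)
The paper itself offers no argument to compare against: it dismisses Lemmas~\ref{lem3}--\ref{lem6} with ``straightforward, thus omitted,'' so your write-up is judged on its own merits. Your sufficiency direction is sound under your (reasonable, Fig.~\ref{fig:influencedregion2}-based) reading that $\textit{IMR}_x$ is the marginal strip minus the IR square: a point there lies outside IR, Lemma~\ref{lem3} keeps the $k$-NN structure and hence $d_x$ fixed, and the count rises by one. The genuine gap is in your necessity direction, in the case $o\in\text{IR}$, which you dispose of with the phrase ``rather than a marginal increment'' --- your own hedge ``through the marginal mechanism'' is doing all the work there, and the claim it papers over is false. An insertion inside IR replaces the $k^{\text{th}}$ neighbor and forces $d_x$ to be recomputed, and the recomputed $n_x$ can perfectly well come out \emph{larger}. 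Concretely, take $k=2$, max norm, $p$ at the origin, and window points $q_1=(0.5,\,1)$, $q_2=(0.9,\,0.1)$: the two nearest neighbors are $q_1$ (distance $1$) and $q_2$ (distance $0.9$), so $d_x=0.9$, $d_y=1$, $d=1$, and $n_x=2$ (both $q_1,q_2$ satisfy $|q_x|\le 0.9$). Now insert $o=(0.85,\,0.85)$, which has $\|o-p\|_\infty=0.85<1$ and so lies inside IR, hence outside $\textit{IMR}_x$ on your reading. The new neighbor pair is $\{o,q_2\}$, the new marginal distance is $d_x'=\max(0.85,0.9)=0.9=d_x$, and the recount gives $n_x=3$, since $q_1$, $q_2$, and $o$ all satisfy $|q_x|\le 0.9$. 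So $n_x$ increased although $o\notin\textit{IMR}_x$, and the biconditional as you argue it fails.

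The repair is to restrict the scope rather than strengthen the argument: the lemma is true (and is how the paper's incremental maintenance actually uses it) as a \emph{trigger rule} for insertions that leave the neighbor structure of $p$ intact, i.e., conditional on $o\notin\text{IR}$. In that regime your invariance step via Lemma~\ref{lem3} pins $d_x$ down, membership in the fixed strip is exactly the condition for the count to increment, and both directions go through as you wrote them. When $o\in\text{IR}$, Lemma~\ref{lem3} already forces a fresh $k$-NN search and full recount for $p$, so no claim about the net change of $n_x$ is needed or warranted --- and, as the example shows, none holds. Note also that if you instead adopt the full-strip reading of $\textit{IMR}_x$ (the strip including its intersection with IR, which Definition~7.2's wording permits), the same example with $o=(0.3,\,0.2)$ breaks sufficiency, since $n_x$ can then \emph{decrease}; so under either precise reading the unconditional ``iff'' is false, and the conditional formulation is the correct statement to prove.
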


\begin{lem}\label{lem6}
	Given a window $w_i$ and a data point $p \in w_i$, an existing point $o$ deleted from $w_i$ will reduce the marginal count $n_x$ (or $n_y$) of $p$ iff $o$ is within \textit{IMR}$_x$ (or \textit{IMR}$_y$) of $p$.
	\vspace{-0.08in}
\end{lem}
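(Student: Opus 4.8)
The plan is to prove the biconditional directly from the definitions of the marginal count $n_x$ and the influenced marginal region $\mathit{IMR}_x$, mirroring the insertion argument used for Lemma \ref{lem5}. Recall from the \textit{KSG} construction that, once the $k$-th nearest neighbor of $p$ fixes the marginal distance $d_x$, the count $n_x$ is by definition the number of points of $w_i$ other than $p$ whose $x$-coordinate lies within $d_x$ of $p$'s, i.e., points $q$ with $\abs{x_q - x_p} \le d_x$; and the $\mathit{IMR}_x$ of $p$ is precisely the marginal strip $\{q : \abs{x_q - x_p} \le d_x\}$. Thus membership in $\mathit{IMR}_x$ is, by construction, exactly the condition under which a point contributes to $n_x$, which reduces the lemma to a definitional check.

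For the reverse direction ($\Leftarrow$), I would assume that $o$ lies in the $\mathit{IMR}_x$ of $p$. Then $\abs{x_o - x_p} \le d_x$, so $o$ is one of the points counted in $n_x$ before deletion; removing $o$ from $w_i$ deletes exactly this contribution, so $n_x$ decreases (by one). For the forward direction ($\Rightarrow$), I would argue by contraposition: if $o \notin \mathit{IMR}_x$, then $\abs{x_o - x_p} > d_x$, so $o$ was never counted in $n_x$, and its deletion leaves $n_x$ unchanged; hence if the deletion reduces $n_x$, the point $o$ must have lain in $\mathit{IMR}_x$. The identical argument applied to the $y$-coordinate and $d_y$ yields the statement for $n_y$ and $\mathit{IMR}_y$.

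The main subtlety to flag is the implicit assumption that the marginal distance $d_x$ of $p$ is held fixed during the deletion. In general, removing $o$ could also alter the $k$-nearest neighbors of $p$ and hence $d_x$ itself, which would redefine the strip determining $n_x$ and break the clean monotone decrease. This coupling, however, is exactly what Lemmas \ref{lem3} and \ref{lem4} isolate: the $k$-nearest-neighbor structure of $p$ changes only when $o$ falls inside the (square) influenced region $\mathit{IR}$ of $p$, a region governing a different effect than the (possibly unbounded) marginal strips $\mathit{IMR}_x$ and $\mathit{IMR}_y$. The incremental procedure therefore first applies the $\mathit{IR}$ lemmas to recompute $d_x, d_y$ for the affected points, and only then applies the present lemma to update the marginal counts with $d_x, d_y$ fixed. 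I would accordingly prove Lemma \ref{lem6} under the standing assumption that $p$'s $k$-th nearest neighbor, and thus $d_x$, is unchanged, so that the change in $n_x$ is governed purely by strip membership and the argument is the routine verification sketched above.
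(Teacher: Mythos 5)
Your proposal is correct. Note that the paper itself offers no proof to compare against: it dismisses Lemmas \ref{lem3}--\ref{lem6} with ``straightforward, thus omitted,'' and the routine definitional check you give---$o \in \mathit{IMR}_x$ iff $\abs{x_o - x_p} \le d_x$ iff $o$ contributes to the count $n_x$, so deletion reduces $n_x$ exactly when $o$ lay in the strip, with the contrapositive handling the forward direction---is plainly the verification the authors had in mind. Your flagged subtlety is a genuine improvement rather than padding: as literally stated, the biconditional can fail when deleting $o$ also changes $p$'s $k$-th nearest neighbor (e.g., $o$ inside the \textit{IR} of $p$ but, say, with $\abs{x_o - x_p} > d_x$ when $d_y > d_x$, since \textit{IR} uses $d = \max(d_x, d_y)$): then $d_x$ itself is redefined, and $n_x$ can change even though $o$ was outside the old $\mathit{IMR}_x$. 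Your standing assumption that $d_x$ is held fixed, justified by the fact that the incremental procedure applies Lemmas \ref{lem3} and \ref{lem4} to recompute neighbor distances before Lemmas \ref{lem5} and \ref{lem6} update the marginal counts, is exactly the reading under which the lemma is true and is how the paper's optimized box-assisted computation uses it; making that assumption explicit is something the paper should arguably have done itself.
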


\begin{proof}
	Proofs of Lemmas \ref{lem3}, \ref{lem4}, \ref{lem5}, \ref{lem6} are straightforward, thus omitted.   
\end{proof}

Lemmas \ref{lem3}, \ref{lem4}, \ref{lem5}, \ref{lem6} display unique properties of \textit{IRs} and \textit{IMRs}. An \textit{IR} maintains an area where any point $p_j$ either being added to or removed from this region will change the $k$ nearest points of $p_i$. In this case, a new $k$-nearest neighbors search is required for $p_i$. Instead, an \textit{IMR} maintains an area where any point $p_j$ either being added to or removed from it will change the marginal counts of $p_i$. In this case, the marginalized neighbors of $p_i$ have to be recounted. 

Fig. \ref{fig:influencedregion2} illustrates how changes are introduced and managed. For simplicity, we only discuss the cases when new points are added into a previous computation. Changes introduced by removing points can be handled in a similar way. Assume that at time $t_1$, a new point $p_8$  is added to the current window and falls into the IR of $p_1$. The addition of $p_8$ changes the $k^{th}$-nearest neighbor of $p_1$. Thus, a new nearest neighbor search for $p_1$ is required. 

At time $t_2$, a new point $p_9$ arrives and falls into the $y$-marginal influenced region of $p_1$, for which it will alter the marginal count $n_y$  (but no new $k$-nearest neighbor search is required in this case). Similarly, a new point $p_{10}$ will increase the marginal count $n_x$. In these cases, only a recount of $n_x$ or $n_y$ is performed. 

As the result of our \textit{optimized box-assisted computation} method, for each window, only a minimum search region (containing new points) and a minimum update region (containing points affected by added and removed points) require additional computation. The rest is reused, thus minimizing the computational cost.

\subsubsection{Time complexity analysis}
The time complexities of SYCOS$_{\text{TD}}$ and SYCOS$_{\text{BU}}$ depend on the number of feasible windows $n_w$ in the search space, and the time $t_w$ to compute the MI value of each window. 

Recall that the k-nearest neighbor (knn) search is the most expensive operator in the MI computation. Using the boxed-assisted method for efficient knn search, the time complexity of the MI computation is $t_w \sim O(n\log n)$ where $n$ is the window size. Moreover, we have the number of feasible windows $n_w \sim O(N^2)$ for SYCOS$_{\text{TD}}$, and $n_w \sim O(N^2 T_{\text{maxIdle}}^2)$ for SYCOS$_{\text{BU}}$.

Thus, the time complexity of SYCOS$_{\text{TD}}$ is $O(N^2 n\log n)$, and of SYCOS$_{\text{BU}}$ is $O(N^2 T_{\text{maxIdle}}^2 n\log n)$, where $n$ is the average window size.

However, we note that these are worst case complexities. In practice, thanks to the filtering of correlated windows during the search, and the incremental MI computation technique, the time complexities of SYCOS$_{\text{TD}}$ and SYCOS$_{\text{BU}}$ are much smaller (refers to the experiments in Section \ref{sec:experimentquantitative}).

\subsection{Distributed iSYCOS using Apache Spark}
The distributed iSYCOS distributes the computation among worker nodes in a Spark cluster to speed up the search. This distributed computation applies to both the selection analysis (to select the most efficient search method among SYCOS$_{\text{BU}}$ or SYCOS$_{\text{TD}}$), and the correlation search. Algorithm \ref{alg:parallel} provides the pseudo-code of the distributed iSYCOS, with the detailed explanation below.

In the selection analysis step (function \textit{algSelection}, lines 4-10), first, iSYCOS samples $m$ data partitions from the input time series (lines 5-7). This data sampling step is performed at the master node since distribution is not needed for a fast operation. Next, iSYCOS \textit{maps} each sampled data partition $p_i$ to a worker node (line 8), and executes SYCOS$_{\text{BU}}$ and SYCOS$_{\text{TD}}$ on $p_i$ (lines 1-3). The performance statistics obtained on $p_i$, including the runtimes and the extracted windows, are collected by the Spark function \textit{collect()} to the master node. Finally, the efficiency scores $\textit{nscore}_{\textit{BU}}$ and $\textit{nscore}_{\textit{TD}}$ are computed at the master node (line 9), and a search method is selected using the policies in Section \ref{sec:selectionanalysis}.

Using the selected search method, iSYCOS proceeds to distribute the correlation computation by first dividing the original time series into multiple \textit{overlapping data chunks} (line 14). 
The overlapping data ensures the correlation analysis is contiguous between data chunks, and is equal to the maximum window size $s_{\max}$. Next, iSYCOS \textit{maps} each overlapping data chunk to a worker node and invokes the selected algorithm, i.e., SYCOS$_{\text{BU}}$ or SYCOS$_{\text{TD}}$, to perform correlation search on that chunk (lines 15-18). Finally, iSYCOS collects the extracted correlated windows from each worker node using the Spark function \textit{collect()}. 

\begin{algorithm}
	\caption{Distributed iSYCOS}
	\label{alg:parallel}
	\footnotesize
	\begin{algorithmic}[1]
		\Statex{{\bf Input:}} $\{X_T,Y_T\}$: \text{pair of time series variables} 
		\Statex \hspace{1.07cm}\text{$s_{\max}$: maximum window size,} 
		\Statex \hspace{1.07cm}\text{$n_p$: number of overlapping data chunks}
		\Statex \hspace{1.07cm}\text{$N$: length of time series}
		\Statex \hspace{1.07cm}\text{lenS: length of the sampled data partition}
		\Statex \hspace{1.07cm}\text{numS: number of the sampled data partitions}
		
		\Statex
		\State def algAnalysis(startIdx, endIdx):
		\State	\hspace{0.5cm}output $\gets$ subprocess.check\_output(startIdx, endIdx)
		\State	\hspace{0.5cm}return output.decode()
		
		\Statex 
		\State def algSelection(N, lenS, numS):
		\State	\hspace{0.5cm}list\_range $\gets$ [[x, x+lenS-1] for x in range(0, N, lenS)]
		\State	\hspace{0.5cm}samples $\gets$ random.sample(list\_range, numS)
		\State	\hspace{0.5cm}samplesRDD $\gets$ sc.parallelize(samples)
		\State	\hspace{0.5cm}t$_{\text{TD}}$, W$_{\text{TD}}$, t$_{\text{BU}}$, W$_{\text{BU}}$ $\gets$ samplesRDD.map(algAnalysis).collect()
		\State	\hspace{0.5cm}nscore$_{\text{TD}}$, nscore$_{\text{BU}}$ $\gets$ calculateScores(t$_{\text{TD}}$, W$_{\text{TD}}$, t$_{\text{BU}}$, W$_{\text{BU}}$)
		\State \hspace{0.5cm}return nscore$_{\text{TD}}$, nscore$_{\text{BU}}$ 
		
		\Statex 
		\State if \_\_name\_\_ == "\_\_main\_\_":
		\State \hspace{0.5cm}sc $\gets$ SparkContext(conf=SparkConf()); 
		\State \hspace{0.5cm}nscore$_{\text{TD}}$, nscore$_{\text{BU}}$ $\gets$ algSelection(N, lenS, numS)
		\State \hspace{0.5cm}dataRDD $\gets$ sc.parallelize([(i-s$_{\max}$,i+N/n$_p$) for  $i$ 
		\Statex \hspace{5cm}  in xrange(0,N,N/n$_p$)]); 
		\State \hspace{0.5cm}if $\text{nscore}_{\text{TD}}$ $>$ $\text{nscore}_{\text{BU}}$ then:
		\State \hspace{1cm}corrWindows $\gets$ dataRDD.\textbf{map}(SYCOS$_{\text{TD}}$).collect(); 
		\State \hspace{0.5cm}else:
		\State \hspace{1cm}corrWindows $\gets$ dataRDD.\textbf{map}(SYCOS$_{\text{BU}}$).collect();
	\end{algorithmic}
\end{algorithm}
	\section{Experimental Evaluation}\label{sec:experiment}
In this section, we evaluate the performance of iSYCOS  in terms of the quality of extracted windows, the runtime efficiency and the accuracy. We use both synthetic and real-world datasets from the energy and the smart city domains. 

\subsection{Baseline methods}
\textit{\textbf{Qualitative evaluation:}} To evaluate the quality of extracted windows, iSYCOS is compared against two baselines: (1) the traditional correlation metric Pearson Correlation Coefficient (PCC) \cite{pearson1895notes}, and (2) Fast Subsequence Search (MASS) \cite{rakthanmanon2012searching}, a well-known method for time series sub-sequences matching. We use MASS2 \cite{mass2} which provides a faster implementation than the original MASS. 
\\
\textit{\textbf{Quantitative evaluation:}} To be quantitatively comparable to iSYCOS, a baseline method must have similar objectives, i.e., it can detect correlations with multiple temporal scales. In this regard, we exclude seem-to-be-similar methods such as \textit{Data Polygamy} \cite{chirigati2016data} (because it does not consider multi-scale correlations), DTW \cite{salvador2007toward} and MASS \cite{rakthanmanon2012searching} (because they consider the problem of time series similarity search instead of correlation search). In fact, there are no existing methods that perform window-based correlation search as iSYCOS. For this reason, iSYCOS runtime efficiency is evaluated by comparing SYCOS$_{\text{BU}}$ and SYCOS$_{\text{TD}}$ against each other, using their different versions: with and without MI optimization and noise pruning. Furthermore, we also analyze the effectiveness of iSYCOS in selecting the most suitable method between SYCOS$_{\text{BU}}$ and SYCOS$_{\text{TD}}$ on different datasets. 

\subsection{Datasets}
To test the robustness of iSYCOS, we generate synthetic datasets containing different types of relations, including both linear and non-linear, monotonic and non-monotonic, functional and non-functional relations. 
To examine iSYCOS effectiveness in identifying correlations in real-life scenarios, we use real-world data from the energy and the smart city domains. The smart city data are obtained from the NYC Open Data portal \cite{nycopen}, containing more than 1,500 spatio-temporal datasets and providing a wide range of information about New York City. The energy data are obtained from an energy trading company in Belgium, under an NDA\footnote{\small Due to the Non Disclosure Agreement (NDA), we do not disclose the company name and its datasets.} agreement, containing energy production data from multiple wind farms. Besides, we use another open energy dataset from Kaggle \cite{energykaggle}, providing energy consumption of electrical appliances in residential buildings in Belgium.

\subsection{Parameter settings}
iSYCOS depends on the setting of $4$ parameters: correlation threshold $\sigma$, noise threshold $\varepsilon$, minimum window size $s_{\min}$, and maximum window size $s_{\max}$. Among them, $\sigma$, $s_{\min}$, and $s_{\max}$ are user-defined parameters, while $\varepsilon$ is a hyper parameter. The $\sigma$ value determines the strength of extracted correlations (larger $\sigma$, stronger correlations), while the value of $\varepsilon$ determines the degree of noise allowed over correlated data. For example, a fraction $\varepsilon / \sigma=0.25$ means that data with MI less than $\frac{1}{4}$ of the correlation threshold are considered as noise and thus, unpromising to explore further.  On the other hand, the values of $s_{\min}$ and $s_{\max}$ are context dependent, and are set based on domain knowledge. Given an application domain, it is usually intuitive how small or large a window could be. For example, when users analyze weather related data, they might decide that the longest duration of a weather event is \textit{two weeks}, and thus set the size of $s_{\max}$ to \textit{two weeks}.

In our experiments, we use the normalized MI (scaled between $[0,1]$) introduced in Section \ref{sec:thres} to set the value of $\sigma$. The hyper parameter $\varepsilon$ is set equal to $\frac{1}{4} \cdot \sigma$ in all experiments. The ratio $\varepsilon / \sigma = 0.25$ is chosen based on the empirical studies conducted on different datasets, which consistently show that $\varepsilon / \sigma \simeq 0.25$ yields the best trade-off between accuracy and runtime gain. Table \ref{tbl:param} lists the values of $\sigma$, $s_{\min}$, and $s_{\max}$ we use in each dataset. 

\begin{table}[!t]
	\footnotesize
	\caption{Parameters setting}
	\label{tbl:param}
	\vspace{-0.1in}
	\centering
	\begin{tabular}{|c|c|c|}
		\hline 
		Parameter & Energy datasets & Smart city datasets \\ \hline
		$\sigma$ & 0.2 & 0.2  \\  \hline 
		$s_{\min}$ & 30 samples $\simeq$ 30 mins  & 4 samples $\simeq$ 1 hour   \\ \hline
		$s_{\max}$ & 10080 samples $\simeq$ 7 days & 2880 samples $\simeq$ 30 days  \\ \hline
	\end{tabular}
\end{table}

\subsection{Qualitative evaluation}

\subsubsection{Using synthetic datasets}

We apply both SYCOS$_{\text{BU}}$ and SYCOS$_{\text{TD}}$ to the synthetic data to test its ability to detect the generated relations. We compare the results to the baselines, i.e., PCC and MASS. Table \ref{tbl:synthetic} shows the types of relations, $y=f(x)$ and $u$ is added noise, recognized by each method.

Our experiments show that both methods SYCOS$_{\text{BU}}$ and SYCOS$_{\text{TD}}$ can identify all types of relations, even when there is noise present in the data. Compared to the baselines, PCC cannot detect non-functional relations, while MASS cannot identify non-linear and non-functional relations. To explain the weak performance of the baselines, we can see that MASS is designed to identify similarities in time series. Therefore, for non-linear and non-functional relations such as a circle, there is no similarity between a variable $x$ that is linearly increasing, and a variable $y$ that follows a round shape, and thus they cannot be identified by MASS.

\begin{table}[!t]
		\footnotesize
		\caption{Identify different types of correlation relations }
		\label{tbl:synthetic}
		\centering	
		\begin{tabular}{|p{3.5cm}|>{\centering} p{1.0cm}|>{\centering} p{1.0cm}|>{\centering} p{0.5cm}|>{\centering} p{0.7cm}|}
			\hline
			Relations: $y=f(x)$ & SYCOS$_\text{BU}$ & SYCOS$_\text{TD}$  & PCC & MASS   \tabularnewline \hline 
			Ind.: $y\sim N(0,1)$\tnote{(*)}, $x\sim N(3,5)$ &\large \checkmark &\large \checkmark &\large \checkmark &\large \checkmark  \tabularnewline \hline
			Linear: $y=2x + u$\tnote{(**)}, $x \in [0,10]$  &\large \checkmark & \large \checkmark &\large \checkmark &\large  \checkmark  \tabularnewline  \hline 
			Exponential: $y=0.01^{x+u}$, $x \in [-10,10]$ &\large \checkmark &\large  \checkmark &\large \checkmark  & \large \checkmark  \tabularnewline \hline 
			Quadratic: $y=x^2+u$, $x \in [-4,4]$ & \large \checkmark &\large  \checkmark & $\Cross$ &\large  \checkmark  \tabularnewline \hline 
			Diamond: $y_1=x+u$, $y_2=8-x+u$, $y_3=-4+x+u$, $y_4=12-x+u$, $x \in [4,8]$ &\large \checkmark &\large  \checkmark &$\Cross$ & \large \checkmark  \tabularnewline \hline 
			Circle: $y=\pm \sqrt{3^2-x^2+u}$, $x \in [-3,3]$ &\large \checkmark &\large  \checkmark & $\Cross$ & $\Cross$  \tabularnewline \hline 
			Sine $y=2*sin(x)+u$, $x \in [0,10]$ &\large \checkmark & \large \checkmark & $\Cross$ & $\Cross$ \tabularnewline \hline 
			Cross: $y_1=x+u$, $y_2=-x+u$, $x \in [-5,5]$ &\large \checkmark  & \large \checkmark & $\Cross$ & $\Cross$  \tabularnewline \hline 
			Quartic: $y=x^4-4x^3+4x^2+x+u$, $x \in [-1,3]$ & \large \checkmark & \large \checkmark & $\Cross$ &\large \checkmark   \tabularnewline \hline 
			Square root:  $y=\sqrt{x}$, $x \in [0,25]$ &\large \checkmark &\large \checkmark &$\Cross$ & \large \checkmark  \tabularnewline 
			\hline
		\end{tabular}
		\begin{tablenotes}
			\item[*] \hspace{0.0cm} *$N(\mu,\sigma)$: normal distribution
			\item[**] \hspace{0.0cm} **$u \sim U(0,1)$: uniform distribution
	\end{tablenotes}
\end{table}
\subsubsection{Using real-world datasets}
We apply SYCOS$_{\text{BU}}$ and SYCOS$_{\text{TD}}$ to real-world datasets to extract correlated windows. We interpret the insights from extracted correlations, and report a few of them below.

\textbf{The smart city datasets:} We focus on two collections of data related to \textit{transportation} and \textit{weather}. The data are measured in \textit{day}, \textit{hour} and \textit{minute} resolutions.

We analyzed the correlations between the \textit{Weather} and the \textit{Taxi} datasets, and found a strong \textit{negative} correlation (one variable increases while the other decreases) between the \textit{wind speed} and the \textit{number of taxi trips} in the city. 
When associating the correlated windows with their time intervals, we observe that the correlations between these two variables only occurred at times when extreme weather events happened in NYC, such as the occurrence of a hurricane or storm. 
We then ordered the extracted windows by their MI values, and found that many of the top ranked windows are in fact associated with extreme weather conditions, as reported in Table \ref{tbl:topwindows}.
\begin{table}[!t]
	\small
	\renewcommand{\arraystretch}{1.3}
	\caption{Top K windows between Taxi and Wind Speed}
	\vspace{-0.2cm}
	\label{tbl:topwindows}
	\centering
	\begin{tabular}{|p{1.6cm}|p{1.6cm}|p{0.6cm}|p{3.3cm}|}
		\hline
		From & To & MI & Event \\ \hline 
		2012/10/29 &	2012/11/02	& 0,65 & Sandy Hurricane \\ 
		2012/07/27 &	2012/07/28 & 0,6	& Tornado hits NYC \\ 
		2012/01/20	& 2012/01/21 &	0,57 &	Snow storm \\ 
		2013/06/07 & 2013/06/10 & 0.54& Tropical storm Andrea \\ 
		2012/11/10	& 2012/11/11 &	0,54 &	Snow storm \\ 
		2012/12/23 & 2012/12/24 &	0,49 & Storm \\ 
		2012/11/06 & 2012/11/07 &	0,48 & Snow storm \\ 
		2012/09/08 & 2012/09/09 &	0,42 &	Tornado  \\ 
		2011/08/26 & 2011/08/31 & 0,41 & Irene Hurricane \\
		2011/10/29 & 2011/10/31 & 0,40 & Snow storm \\ 
		\hline
	\end{tabular}
	\vspace{-0.2cm}
\end{table}

We test the relation between the \textit{average rain precipitation} and the \textit{number of taxi trips}. In the extracted windows, we found a \textit{negative} correlation between these two variables: a drop in taxi trips associated with abnormally high rain. These correlations also occurred during extreme weather events, like the wind speed. 

We test another pair of variables, the \textit{taxi fare} and the \textit{rain precipitation}. The authors in \cite{chirigati2016data}, when using the same datasets, report a \textit{positive} relationship (both variables are increasing or decreasing together) between the fare taxi drivers earn and the rain precipitation, suggesting that taxi drivers increase earnings when it rains. 
In our findings, we also found this \textit{positive} correlation. Moreover, we found that this correlation is visible only in \textit{hour} resolution, but disappears in \textit{day} resolution. This phenomena might be explained based on the fact that taxi drivers are target earners: taxi drivers have a daily income target, and reach their targets sooner when it rains, after which they quit driving for the remainder of the day, and thus keep the same daily income.

In another finding, we found a \textit{weak positive} correlation between the \textit{taxi fare} and the \textit{trip duration}. The correlation suggests that taxi driver is likely to earn more when the trip duration is longer. Although this correlation seems obvious, it is not trivial why it is only a weak correlation. To explain this weak correlation, we note that trip duration gets longer either because of traffic jam (the fare does not increase) or because the travel distance is long (the fare does increase). If traffic jam is the cause of long trip duration, linking \textit{trip duration} to \textit{taxi fare} is not enough, and thus, explains the \textit{weak} correlation between the two variables. This suggests that other variables, such as \textit{travel distance}, can be of interest to be analyzed together with \textit{taxi fare}. 

\textbf{The energy datasets:}
The energy data are recorded at high frequency (every 2 seconds), thus, we aggregate them into minute resolution before applying iSYCOS.

When analyzing the energy data, we first want to confirm that iSYCOS can find known correlations. Thus, we test the relation between the produced energy and its source, i.e., the \textit{active power} and the \textit{solar irradiation}, and the \textit{active power} and the \textit{wind speed}. The correlated windows extracted from iSYCOS indeed confirm the strong \textit{positive} correlations between the produced energy and its respective source. For example, extracted windows show that the generated power increases when the solar irradiation is high or when the wind speed is strong, and vice versa. This strong correlation holds for the entire time series of the considered variables. 
Moreover, we found that the wind speed is strongly correlated with the rotor speed of the wind turbine. 

When analyzing the relation between the produced power and weather-related variables, we found a \textit{negative} correlation between the power production and the humidity, and a \textit{positive} correlation between the power production and the temperature. This inspires us to test the relationship between the humidity and the temperature. As expected, we found a \textit{negative} correlation between these two variables. These inter-correlations between the three variables are naturally intuitive, since high solar irradiation and low humidity result in high temperature, and vice versa. Similarly, when wind turbines are the energy source, we found a \textit{negative} correlation between the air pressure and the wind speed, and thus, also a \textit{negative} correlation between the air pressure and the produced power. These correlations can be explained based on meteorology, in which the air flows from high pressure area to low pressure area, and thus creates wind. This results in a \textit{negative} correlation between the air pressure and the wind speed. Since the wind speed is directly correlated to the produced energy (through the wind turbine), it explains the negative correlation between the air pressure and the energy production.

A somewhat surprising finding is that we found the positive correlation between the wind direction and the wind speed. As wind direction indicates the direction in which the wind flows (measured in degrees), and wind speed describes how fast the air is moving (measured in km/hour), it is not trivial how to explain the nature of this correlation. However, the discovered correlation might prompt users new opportunities to perform further analysis. As we do not attempt to explain the causality behind the correlation, we refer readers to other articles such as \cite{velazquez2011comparison} for further understanding of this phenomenon. 

Similar meteorological phenomena are also found in the open energy dataset from Kaggle. For example, we confirmed a strongly negative correlation between the room temperature and the humidity, and between the humidity and the air pressure. In contrast, a strongly positive correlation is detected between the temperature and the air pressure.

\subsection{Quantitative evaluation} \label{sec:experimentquantitative}

\subsubsection{SYCOS$_{\text{TD}}$ and SYCOS$_{\text{BU}}$ performance comparison}\label{sec:runtime}
We first evaluate the runtime efficiency of iSYCOS by comparing the performance of SYCOS$_{\text{TD}}$ and SYCOS$_{\text{BU}}$ on different datasets. The experiments are run on a VM with 4 vCPUs, 32GB of RAM, and 1024GB of storage.

\begin{table}[!t]
	\centering
	\caption{MI and window sizes of SYCOS$_{\text{TD}}$ and SYCOS$_{\text{BU}}$}
	\label{tbl:MIandWindowTDBU}
	\vspace{-0.2cm}
	\begin{tabular}{|>{\rowmac}p{3.05cm}|>{\rowmac}>{\centering}p{0.77cm}|>{\rowmac}c|>{\rowmac}c<{\clearrow}|} \hline
		Data & Method & Avg. W. Size & Avg. MI \\ \hline
		
		\multirow{2}{*}{Taxi Trips, Traffic Speed} & \bfseries TD & \bfseries  719  & \bfseries  0.84\\
		& BU & 22 & 0.68  \\ \hline
		
		\multirow{2}{*}{Taxi Trips, Rain} & TD & 24 & 0.25 \\
		 & \bfseries  BU & \bfseries  14  & \bfseries  0.35  \\ \hline
		
		\multirow{2}{*}{Taxi Fare, Rain} & TD &  19 & 0.36 \\
		 & \bfseries BU & \bfseries 16 & \bfseries  0.44  \\ \hline
		 
		 \multirow{2}{*}{W. Speed, A. Power} &\bfseries  TD & \bfseries 549  & \bfseries 0.6 \\
		 & BU &  184 & 0.62	  \\ \hline
		 
		 \multirow{2}{*}{R. Speed, A. Power} & \bfseries TD & \bfseries 662  & \bfseries  0.68 \\
		 & BU &  116	& 0.64  \\ \hline
		 
		 \multirow{2}{*}{W. Speed, Rotor Speed} & \bfseries TD & \bfseries 648  & \bfseries  0.57  \\
		 & BU &  131 & 0.52	  \\ \hline
		
		\multirow{2}{*}{W. Direction, Pitch Angle} & TD &  47 & 0.24 \\
		 & \bfseries BU &  \bfseries 16 &\bfseries  0.28  \\ \hline
		
		\multirow{2}{*}{W. Direction, W. Speed} & TD & 86  & 0.35  \\
		 & \bfseries BU & \bfseries 56	& \bfseries 0.37  \\ \hline
	\end{tabular}
\end{table}

Fig. \ref{fig:runtime} shows the runtimes of SYCOS$_{\text{TD}}$ and SYCOS$_{\text{BU}}$, and Table \ref{tbl:MIandWindowTDBU} shows the average MI and window size extracted from each method on different variable pairs of the smart city and the energy datasets. From Fig. \ref{fig:runtime}, we can see that  SYCOS$_{\text{TD}}$ and SYCOS$_{\text{BU}}$ perform differently on different datasets. There are cases where SYCOS$_{\text{TD}}$ outperforms SYCOS$_{\text{BU}}$, and vice versa. For example, SYCOS$_{\text{TD}}$ outperforms SYCOS$_{\text{BU}}$ on the pairs (Taxi Trips, Traffic Speed), (Wind Speed, Active Power), (Rotor Speed, Active Power), and (Wind Speed, Rotor Speed). The average MI of extracted windows from these pairs are significantly higher than other pairs (e.g., 0.52-0.84), indicating that the variables are strongly correlated. The average window sizes of SYCOS$_{\text{TD}}$ are also significantly larger than SYCOS$_{\text{BU}}$. This supports our initial hypothesis that the top-down approach is more efficient when the variables are strongly correlated, and the correlated windows are large. On these pairs of variables, SYCOS$_{\text{TD}}$ obtains an average speedup of 94.27 over SYCOS$_{\text{BU}}$, and the speedup range is [1.4-370].

Conversely, SYCOS$_{\text{BU}}$ outperforms SYCOS$_{\text{TD}}$ on the pairs (Taxi Trips, Rain Precipitation), (Taxi Fare, Rain Precipitation), (Wind Direction, Pitch Angle), (Wind Direction, Wind Speed), and (Pitch Angle, Rotor Speed). It is noticeable that the average MIs of extracted windows from these pairs are significantly lower (e.g, 0.2-0.44) than those where SYCOS$_{\text{TD}}$ is faster, indicating that the variables are weakly or moderately correlated. The average sizes of extracted windows on these pairs are also significantly smaller. This also supports our initial hypothesis that the bottom-up approach is more efficient when the variables are not well correlated, and the correlated windows are small and sparse. On these pairs of variables, SYCOS$_{\text{BU}}$ obtains an average speedup of 5.3 over SYCOS$_{\text{TD}}$, and the speedup range is [1.3-15.5]. We note that however, the average MIs of extracted windows from SYCOS$_{\text{BU}}$ and SYCOS$_{\text{TD}}$ are similar on each variable pair.

In summary, when comparing SYCOS$_{\text{TD}}$ and SYCOS$_{\text{BU}}$, we conclude that SYCOS$_{\text{TD}}$ should be used to search for correlated windows when the variables are strongly correlated, while SYCOS$_{\text{BU}}$ should be used when the variables are weakly or moderately correlated.

\input{figures.tex}

\subsubsection{Noise pruning and MI optimization}		
Next, we evaluate how effective the proposed optimizations, including the MI-based noise pruning (Section \ref{sec:noisetheory}) and the MI optimization computation (Section \ref{sec:implementation}) are in improving the performance of SYCOS$_{\text{TD}}$ and SYCOS$_{\text{BU}}$. Particularly, we compare the runtimes of four different versions of SYCOS$_{\text{TD}}$ and SYCOS$_{\text{BU}}$ on different variable pairs: the original algorithm without optimizations (named as Origin), with noise theory (Noise), with MI optimization (MI Opt.), and with both noise theory and MI optimization (Both).

Figs. \ref{fig:performanceTD} and \ref{fig:performanceBU} show the results of this comparison. It can be seen that applying the proposed optimizations improves significantly the performance of both SYCOS$_{\text{TD}}$ and SYCOS$_{\text{BU}}$. Specifically, the noise pruning and the MI optimization yield different speedups on the performance of the two methods. The noise pruning improves the performance of SYCOS$_{\text{TD}}$ by a speedup of 1.8 in average, and of SYCOS$_{\text{BU}}$ by a speed up of 7.4 in average. The MI optimized computation yields an average speedup of 6.7 on SYCOS$_{\text{TD}}$, and an average speed up of 2.9 on SYCOS$_{\text{BU}}$. However, applying both optimizations always yields better speedup, with an average of 15.5 on SYCOS$_{\text{TD}}$, and 17.8 on SYCOS$_{\text{BU}}$.

In Figs. \ref{fig:performanceTD}  and \ref{fig:performanceBU}, we observe that the noise pruning is more efficient than the MI optimized computation on the pairs (Taxi, Traffic Speed), (Taxi, Rain), (Wind Direction, Wind Speed), and (Pitch Angle, Rotor Speed), and less efficient on the pairs (Wind Speed, Active Power), (Rotor Speed, Active Power), and (Wind Speed, Rotor Speed). This indicates that the effectiveness of the proposed optimizations is data dependent. However, a general trend is that, the noise pruning is more efficient than the MI optimization when the data are weakly correlated (thus, more noise is present in the data) as shown in Figs. \ref{fig:performanceBUb}, \ref{fig:performanceBUc}, and \ref{fig:performanceBUd}. In contrast, the MI optimization is more efficient when the data are strongly correlated (thus, shared MI computation occurs more often), as shown in Figs. \ref{fig:performanceTDb}, \ref{fig:performanceTDc}, and \ref{fig:performanceTDd}. This coincides with the way SYCOS$_{\text{TD}}$ and SYCOS$_{\text{BU}}$ operate, where we observe that the noise pruning is generally more efficient in SYCOS$_{\text{BU}}$ than in SYCOS$_{\text{TD}}$, and the MI optimization is more efficient in SYCOS$_{\text{TD}}$ than in SYCOS$_{\text{BU}}$.

\begin{table*}[!t]
	\centering
	\caption{Accuracy of noise pruning on SYCOS$_{\text{TD}}$ and SYCOS$_{\text{BU}}$}
	\label{tbl:accuracy}
	\vspace{-0.2cm}
	\begin{tabular}{|>{\rowmac}p{3.5cm}|>{\centering}>{\rowmac}p{0.79cm}|>{\rowmac}c|>{\rowmac}c|>{\rowmac}c|>{\rowmac}c|>{\rowmac}c|>{\rowmac}c|>{\rowmac}c|>{\rowmac}c|>{\rowmac}c|>{\rowmac}c<{\clearrow}|}
		\hline
		\multirow{2}{*}{Data} & \multirow{2}{*}{Method} & \multicolumn{10}{|c|}{Accuracy (\%)} \\ \cline{3-12}
		&  & 10K & 20K & 30K & 40K & 50K& 60K& 70K& 80K& 90K& 100K \\ \hline
		
		\multirow{2}{*}{Taxi Trips, Wind Speed} & TD &	90 &	93 &	92 &	91 &	89&	91&	89& 92 &	91& 	91 \\
		& BU &	93&	95&	91&	92&	94&	93&	93&	95&	92&	95   \\ \hline
		
		\multirow{2}{*}{Taxi Trips, Rain} & TD & 	80 &	81&	80&	80&	84&	81&	82&	85&	84&	85  \\
		& BU &	84&	85&	87&	88& 88&	88&	88&	88&	88&	84 \\ \hline
		
		\multirow{2}{*}{Taxi Fare, Rain} & TD & 100 &	85 & 100 &	100 &	93&	94&	90 &	91 &	92 &	93 \\
		& BU &	100 &	100 & 93 &	88 &	84&	88&	87&	89 &	91& 	88 \\ \hline
		
		\multirow{2}{*}{Taxi Trips, Traffic Speed} & TD &	100 &	100 &	100 &	100& 100 &	100 &	100 & 75 &	100 & 100 \\
		& BU &	98 &	98&	98 &	98 &	98 & 98 &	98&	98&	98&	98  \\ \hline
		
		\multirow{2}{*}{Wind Speed, Pitch Angle} & TD & 100&	96&	 97 &97.9& 98&	98& 100& 99&	97 & 97 \\
		& BU &	100& 98 &96 &	92 &92 &	94 & 95 &	92 & 90 &	91   \\ \hline
		
		\multirow{2}{*}{Wind Speed, Active Power} & TD & 99 & 95 & 95 &	95 &	95 & 97 &	96& 96 &96 &	96 \\
		& BU & 98 &	96& 97 &	97& 96 &	98& 100 &	94 & 95 &	94	  \\ \hline
		
		\multirow{2}{*}{Rotor Speed, Active Power} & TD & 100&	100& 100 &	100 & 100 &	100 & 100 &	100& 100&	100 \\
		& BU & 100 &	100& 100 &	100& 100 &	100 &	99& 99 &	100 & 99  \\ \hline
		
		\multirow{2}{*}{Wind Direction, Pitch Angle} & TD &100 &	100& 99 & 	88& 91 &	93& 93 &	91 & 90 &	93  \\
		& BU & 89 &	96& 95 &	92& 92 &	93& 93 &	93 &95 &	94 \\ \hline
		
		\multirow{2}{*}{Wind Direction, Wind Speed} & TD & 100&	100& 100&	100& 100 &	97 & 98 &	98 & 98 &	99  \\
		& BU & 90& 89 &	85& 85 &	87&	 90 & 88 & 89 & 85 &	92 \\ \hline
		
		\multirow{2}{*}{Wind Speed, Rotor Speed} & TD & 100&	100& 98&	96&	96 & 95 &	99&	98 & 100 &98   \\
		& BU & 99 & 98& 98 &	97& 95 &	96 & 96 &	96& 94 &	96	  \\ \hline
		
		\multirow{2}{*}{Pitch Angle, Rotor Speed} & TD & 100 & 96& 98 &	98& 95 & 	96 & 98&	97 & 97 &	96  \\
		& BU & 95 &	93& 95 & 97 & 97 &	97 &	97& 98  & 96&	98  \\ \hline
	\end{tabular}
\end{table*}

\subsubsection{The accuracy of noise pruning}
Next, we evaluate how accurate the noise pruning is compared to the exhaustive search. Particularly, we compare the similarity of windows extracted from  SYCOS$_{\text{TD}}$ and  SYCOS$_{\text{BU}}$ with and without noise pruning. 
Note that two windows are considered to be similar if they cover a similar range of data, i.e., the indices of the two windows are overlapping. 
Table \ref{tbl:accuracy} shows the accuracy of SYCOS$_{\text{TD}}$ and SYCOS$_{\text{BU}}$ on the tested datasets. It can be seen that the noise pruning yields highly accurate results, with the accuracy ranging from $80\%$ to $100\%$.

\subsubsection{Effectiveness of the selection algorithm}
We evaluate the effectiveness of iSYCOS in selecting the most efficient search method for a given pair of input time series. More specifically, for each pair of variables, we perform random sampling to select $6$ data partitions, and apply the selection algorithm (Section \ref{sec:selectionanalysis}) on them. Table \ref{tbl:selection} ($\alpha = 0.5$) shows the computed efficiency scores of SYCOS$_{\text{TD}}$ and SYCOS$_{\text{BU}}$. 
The method with higher score (in bold) is chosen. As can be seen, iSYCOS selects SYCOS$_{\text{TD}}$ for the pairs (Taxi Trips, Traffic Speed), (Wind Speed, Active Power), (Rotor Speed, Active Power), and SYCOS$_{\text{BU}}$ for the rest. Compared to the analysis in Section \ref{sec:runtime}, iSYCOS has correctly selected the most efficient search approach for each tested variable pair. 

Notice however that iSYCOS has selected SYCOS$_{\text{BU}}$ instead of SYCOS$_{\text{TD}}$ for the pair (Wind Speed, Rotor Speed). When we further analyze this pair, we observe that the runtimes of SYCOS$_{\text{TD}}$ or SYCOS$_{\text{BU}}$ on this pair are similar (Fig. \ref{fig:runtime4} on the sampled data partitions. Thus, their efficiency scores are close. However, as we change the weight $\alpha$ to $\alpha = 0.2$ (put more weight on the large window than on the runtime), we obtain the efficiency scores $\text{nscore}_{\text{BU}} = 0.79$ and $\text{nscore}_{\text{TD}} = 0.86$. In this case, iSYCOS selects SYCOS$_{\text{TD}}$. 

Next, we analyze the impact of $\alpha$ on iSYCOS selection by changing $\alpha$ in the range $(0.2,...,0.8)$, and computing the efficiency scores of SYCOS$_{\text{TD}}$ and SYCOS$_{\text{BU}}$ accordingly. We find that when the data are strongly or weakly correlated,  iSYCOS selection is robust w.r.t. $\alpha$: changing $\alpha$ does not change the selection. This is intuitive as when SYCOS$_{\text{TD}}$ is significantly better than SYCOS$_{\text{BU}}$ (strong correlation) or vice versa (weak correlation), their efficiency scores cannot be changed by $\alpha$ in a way that alters the selection. However, in cases where data are moderately correlated, and SYCOS$_{\text{TD}}$ and SYCOS$_{\text{BU}}$ performances become comparable such as the case of (Wind Speed, Rotor Speed), changing $\alpha$ can change the selection between SYCOS$_{\text{TD}}$ and SYCOS$_{\text{BU}}$, depending on whether the runtime or the window size is weighted more.

\begin{table}[!t]
	\footnotesize
	\caption{Efficiency scores of SYCOS$_{\text{TD}}$ or SYCOS$_{\text{BU}}$}
	\label{tbl:selection}
	\vspace{-0.1in}
	\centering
	\begin{tabular}{|>{\rowmac}p{4.5cm}|>{\rowmac}c|>{\rowmac}c<{\clearrow}|}
		\hline 
		Data & SYCOS$_{\text{TD}}$ & SYCOS$_{\text{BU}}$ \\ \hline
		Taxi Trips, Traffic Speed & \bfseries{5.86} & 0.74 \\ \hline
		Wind Speed, Active Power &  \bfseries{1.48} & 1.01 \\ \hline
		Rotor Speed, Active Power &  \bfseries{2.5} & 0.66 \\ \hline
		Wind Speed, Rotor Speed &  1.14 & \bfseries{1.46} \\ \hline
		Taxi Trips, Rain& 0.62  & \setrow{\bfseries}3.02   \\ \hline
		Taxi Fare, Rain& 0.54 & \setrow{\bfseries}41.21  \\ \hline
		Wind Direction, Pitch Angle & 1.16& \setrow{\bfseries}1.34 \\ \hline
		Wind Direction, Wind Speed & 0.8 & \setrow{\bfseries}2.7 \\ \hline
	\end{tabular}
\end{table}

\subsubsection{iSYCOS scalability}
Finally, we evaluate the scalability of iSYCOS on large synthetic datasets generated from the smart city and the energy data. The synthetic data contain up to 100 millions of data points.  We use a Spark cluster containing 20 high-end nodes, each with 64 cores, 256GB of RAM, 24TB of storage, and runs Apache Spark 3.1. 

On the Spark cluster, we fix the data size to 100M data points while changing the number of worker nodes from 1 to 20. Fig. \ref{fig:scalabilitytest} shows the runtimes of the distributed iSYCOS on different datasets, as well as the speedup obtained when the computing resources increase. It can be seen that, iSYCOS scales well on the large datasets, and obtain a linear speedup w.r.t. the available resources.

	\section{Conclusion and Future Work}\label{sec:conclusion}
In the present paper, we propose an integrated SYnchronous COrrelation Search (iSYCOS) framework to find multi-scale temporal correlations in big time series. iSYCOS integrates both the top-down approach SYCOS$_{\text{TD}}$ and the bottom-up approach SYCOS$_{\text{BU}}$ into a single framework, and is thus able to efficiently extract various types of correlation relations, including both linear and non-linear, monotonic and non-monotonic, functional and non-functional, from time series. Our main contributions are the two search approaches that fit to specific scenarios (top-down for strongly correlated data, and bottom-up for weakly correlated data), and the proposed novel MI-based theory to identify noise in the data, and the MI optimized computation technique to optimize the search process. We also propose a robust selection algorithm that can automatically analyze and select the most efficient search approach for a given input time series. We further design a distributed version of iSYCOS that scales well on a Spark cluster to handle big time series. We perform an extensive experimental evaluation to evaluate the effectiveness and the performance of iSYCOS using both synthetic and real-world datasets. The evaluation shows that our method can capture various types of relations in synthetic data, and interesting correlations in real-world data. Moreover, the proposed optimizations are shown to be very efficient, and result in a speedup of up to an order of magnitude. The selection algorithm is robust on different datasets, and the distributed iSYCOS scales linearly to the data size and the number of computing nodes.

In future work, iSYCOS will be extended to capture correlations across spatial dimensions. The result of this work can also provide the foundation for advanced data analytics, such as to perform mining or infer causal effects from the extracted correlations.
	
	\IEEEdisplaynontitleabstractindextext

	%
	\IEEEpeerreviewmaketitle

	\ifCLASSOPTIONcompsoc
	
	
	
	\bibliographystyle{abbrv}
	\bibliography{references} 
	
	
	
	
	
	\ifCLASSOPTIONcaptionsoff
	\newpage
	\fi

\end{document}